\newcommand{\suu}{\textsf{SUU}\xspace}
\newcommand{\suureform}{$\textsf{SUU}^*$\xspace}
\newcommand{\suui}{\textsf{SUU-I}\xspace}
\newcommand{\stochi}{\textsf{STOCH-I}\xspace}
\newcommand{\suuc}{\textsf{SUU-C}\xspace}
\newcommand{\suut}{\textsf{SUU-T}\xspace}
\newcommand{\schedname}[1]{\ifmmode{\rm #1}\else#1\fi\xspace}
\newcommand{\ALGobl}{\schedname{SUU-I-OBL}}
\newcommand{\ALGsem}{\schedname{SUU-I-SEM}}
\newcommand{\ALGstci}{\schedname{STC-I}}
\newcommand{\ALGchn}{\schedname{SUU-C}}
\newcommand{\Topt}{T_{\rm OPT}}
\newcommand{\Tsig}{T_{\Sigma}}
\newcommand{\Toff}{T_{\rm OFF}}
\newcommand{\poly}{{\rm poly}}
\def\hyperspc{\kern -0.22em}
\newcommand{\setthmcount}[1] {
  \newcounter{#1}
  \setcounter{#1}{\thetheorem}
}
\renewcommand{\set}[1]            {\{ #1 \}}
\renewcommand{\abs}[1]            {| #1|}
\renewcommand{\card}[1]           {| #1|}
\renewcommand{\floor}[1]          {\lfloor #1 \rfloor}
\renewcommand{\ceil}[1]           {\lceil #1 \rceil}
\renewcommand{\ang}[1]            {\ifmmode{\langle #1 \rangle}
                                 \else{$\langle${#1}$\rangle$}\fi}
\renewcommand{\prob}[1]           {\Pr\{ #1 \}}
\renewcommand{\expect}[1]         {{\rm E}[ #1 ]}
\newcommand{\bigo}[1]           {O\!\left(#1\right)}
\begin{document}
\title{Improved Approximations for Multiprocessor Scheduling Under Uncertainty \\
  %{\Large (Regular Submission)}
  }

\author{Christopher Crutchfield \and Zoran Dzunic 
\and Jeremy T. Fineman\thanks{Supported in part by Google, NSF Grant CSR-AES 0615215.}
\and David R. Karger 
\and Jacob H. Scott\thanks{Supported by an NDSEG Fellowship.} \\ \\
   Computer Science and Artificial Intelligence Laboratory \\
   Massachusetts Institute of Technology \\
   Cambridge, MA 02139, USA \\
   \texttt\{cyc,zoki,jfineman,karger,jhscott\}@csail.mit.edu}

\maketitle
\thispagestyle{empty}

This paper presents improved approximation algorithms for the problem
of \defn{multiprocessor scheduling under uncertainty} (\suu), in which
the execution of each job may fail probabilistically.  This problem is
motivated by the increasing use of distributed computing to handle
large, computationally intensive tasks.  In the \suu problem we are
given $n$ unit-length jobs and $m$ machines, a directed acyclic graph
$G$ of precedence constraints among jobs, and unrelated failure
probabilities $q_{ij}$ for each job~$j$ when executed on machine~$i$ for
a single timestep.  Our goal is to find a schedule that minimizes the
expected makespan, which is the expected time at which all jobs
complete.

Lin and Rajaraman gave the first approximations for this NP-hard
problem for the special cases of independent jobs, precedence
constraints forming disjoint chains, and precedence constraints
forming trees.  In this paper, we present asymptotically better
approximation algorithms.  In particular, we give an
$\bigo{\log\log(\min\set{m,n})}$-approximation for independent jobs
(improving on the previously best $\bigo{\log n}$-approximation).  We also
give an $\bigo{\log(n+m) \log\log(\min\set{m,n})}$-approximation algorithm
for precedence constraints that form disjoint chains (improving on the
previously best
$\bigo{\log(n)\log(m)\frac{\log(n+m)}{\log\log(n+m)}}$-approximation by a
$(\log n/\log\log n)^2$ factor when $n = m^{\Theta(1)}$.  Our
algorithm for precedence constraints forming chains can also be used
as a component for precedence constraints forming trees, yielding a
similar improvement over the previously best algorithms for trees.

% LocalWords: timestep makespan

%SPAA submission guideline asks for cover page with authors and
%abstract. Doesn't count towards page limit.
\newpage
\setcounter{page}{1}

\secput{intro}{Introduction} 

Our work concerns approximation algorithms for multiprocessor
  scheduling under uncertainty, first introduced in
\cite{Malewicz05}.  This model extends the classical construction of
machine scheduling to handle cases where machines run jobs for
discrete timesteps and succeed in processing them only
probabilistically.  Our motivation stems from the increasing use of
distributed computing to handle large, computationally intensive
tasks. Projects like Seti@Home~\cite{Anderson02} divide computations
into smaller jobs of relatively uniform length, which are then
executed on unreliable machines (e.g., of volunteers).

Scheduling multiple machines to process the same job at once can help
overcome the problem of unreliable machines, but many machines
processing a single job can also slow down overall throughput. The
situation is exacerbated when precedence constraints among jobs are
present, which is often the case for sophisticated computations; here
a single job failing may delay the start of many others. Note that the
special case of having no precedence constraints retains practical
significance. Google's MapReduce architecture \cite{DeanG04}, for
example, generates jobs whose dependencies form a complete bipartite
graph, which is equivalent to two phases of independent jobs.
  
Motivated by these examples, we study the \defn{multiprocessor
  scheduling under uncertainty} (\suu) problem.  An \suu instance is
comprised of a set of $n$ unit-time jobs and a set of $m$ machines.
For each machine $i$ and job $j$, we are given a failure probability
$q_{ij}$, which is the chance that job $j$ does not complete when run
on machine $i$ for a single timestep.  Any precedence constraints are
modeled as a directed acyclic graph (dag).  Our objective is to
construct a schedule assigning machines to eligible jobs at each
timestep, minimizing the expected time until all jobs have
successfully completed.  In contrast to many other scheduling
problems, \suu allows multiple machines to execute the same job in a
single timestep.

\punt{When properly reformulated (in \secref{reform}), the \suu
  problem is strikingly similar to a stochastic scheduling problem,
  often referred to by
  $R|\id{pmtn},\id{prec},p_j\!\!\sim\!\!\id{stoch}|\expect{C_{\max}}$ in the
  scheduling literature~\cite{XXX}. This is the problem of
  preemptively scheduling precedence-constrained jobs whose processing
  times are given by some (known) distributions, on unrelated parallel
  machines so as to minimize their expected makespan.  Here, there are
  no failure probabilities.  Instead, each job~$j$ has an processing
  time $p_j$ that is chosen stochastically from some (known)
  distribution, and each machine has an (unrelated) speed for each
  job; typically, the distributions are exponential.  The job
  completes when the work done matches the stochastically chosen
  processing time.  The major differences in \suu are that multiple
  machines may be assigned to a job at the same time, jobs must be
  scheduled at a unit granularity, and all jobs have the same
  distribution over processing times. } 

\subheading{Related work}

Malewicz's initial presentation of \suu~\cite{Malewicz05} includes a
polynomial-time dynamic-programming solution for instances where both
the number of machines and the width of the precedence dag are
constant. If either of these constraints is relaxed, he proves that
the problem becomes NP-Hard.  Furthermore, when both constraints are
removed, there is no polynomial-time approximation algorithm for the
problem achieving an approximation ratio lower than 5/4, unless
$P=NP$.  This work does not include approximation algorithms for the
general (NP-Hard) problem.

Lin and Rajaraman present the first (and, to date, only) approximation
algorithms for \suu~\cite{LinRa07}.  Using a greedy approximation
algorithm to maximize the chance of success across all jobs, they give
an $\bigo{\log n}$-approximation when all jobs are independent. More
sophisticated techniques, including LP-rounding and random
delay~\cite{LeightonMaRa94, ShmoysStWe91}, yield a variety of
$\bigo{\poly\log(n+m)}$ approximations when precedence graphs are
constrained to form only disjoint chains, collections of in- or
out-trees, and directed forests. For these settings, our algorithms
improve their approximation ratios by a $(\log n / \log\log n)^2$
factor, when $n=m^{\Theta(1)}$. See Table \ref{table:comparison} 
for a complete comparison.

The wider field of machine scheduling is an established and
well-studied area of research with a large number of variations on its
core theme (see \cite{Karger97} for a survey). There are three main differences
between \suu and problems studied in the literature.  First, in \suu,
jobs may run on multiple machines in the same timestep.  Second, each
job has a chance of failing to complete on any machine
that processes it.  Third, jobs must be scheduled at unit granularity.

\punt{
There is a large body of work on scheduling problems involving
uncertainty, but the work in this vein commonly has jobs complete with
certainty on all machines, and instead allows the processing time
required for each one to be set according to some probability
distribution.  We show in \secref{stoch}\note{Jer:Do we show it in the
  reformulation section or the stochastic section?} that \suu is
closely related to the problem of preemptively scheduling
precedence-constrained jobs whose processing times are given by
exponential distributions, on unrelated parallel machines so as to
minimize their expected makespan.  In \defn{stochastic scheduling},
this problem is known as $R|\id{pmtn}, \id{prec}, p_j\sim\id{stoch} |
\expect{C_{\max}}$.  We know of no previous nontrivial approximation
algorithms for this problem, although $O(1)$-approximation algorithms
exist\note{Jer: Actually, aren't they optimal algorithms, not
  approximations?} when the machines are related.\note{Need citations
  here}
}

Of deterministic scheduling problems, \suu most closely resembles
$R|\id{prec},\id{pmtn}|C_{\max}$\cite{LeightonMaRa94, ShmoysStWe91,
  KumarMaPa05}, the problem of preemptively scheduling jobs with
precedence constraints on unrelated parallel machines so as to
minimize makespan.  Instead of failure probabilities $q_{ij}$, there
is a deterministic processing time $p_{ij}$, denoting how long it
takes for machine~$i$ to complete job~$j$.  In contrast to \suu,
however, machines never fail, and jobs may only run on one machine at
a time.  As in~\cite{LinRa07}, techniques for this problem play an
important role in our approximations.  We also borrow techniques from
``job-shop scheduling''~\cite{French82} for our \suu algorithms for precedence
constraints, but the particulars of that setting are not very similar
to the ones we consider here.  
\punt{
We are only aware of approximations for
$R|\id{prec},\id{pmtn}|C_{\max}$ when precedence constraints form
disjoint chains or trees.
}

There is also a large body of work in stochastic scheduling (see
\cite[Part 2]{Pinedo02} for a representative sample). The majority of
the work in this area considers how to schedule jobs whose input
lengths are not known, but instead given as random variables
distributed according to some probability distribution. Particular
attention has been paid to the case when these distributions are
restricted to exponential families, which is similar in some respects
to \suu (see \appref{stoch}).  However, we know of no approximation
algorithms,  even with this restriction, when machines are unrelated.

\punt{There is a large body of work in stochastic scheduling.  For unrelated
machines ($R|\id{pmtn},\id{prec},p_j\!\!\sim\!\!\id{stoch}|\expect{C_{\max}}$), we
know of no previous nontrivial approximation algorithms.  There are,
however, $O(1)$-approximations when machines are related (i.e., for
$P|\id{pmtn},\id{prec},p_j\sim\id{stoch}|\expect{C_{\max}}$), and all
processing times are from exponential distributions~\cite{} and other
nice distributions~\cite{}.\note{Jer:May be worth mention the
  weight-completion-time objective.  I seem to recall seeing unrelated
  machines there.}
}

\punt{We are not aware of many scheduling problems in which jobs may run on
multiple machines at the same time.  Interestingly, Serafini,
motivated by scheduling looms in the textile industry, provides a
polynomial-time algorithm~\cite{Serafini96} for scheduling independent
jobs on unrelated machines, given that jobs can be split arbitrarily
and run independently on different machines.  This problem is quite
close to a deterministic analog of our model (with independent jobs),
but allows arbitrarily small splits (as opposed to unit-step
allocations), yielding a simpler linear-programming solution that is
not applicable to our problem.}

\subheading{Our results}
%\begin{figure}
\begin{table}[t]

\centering
\begin{tabular}{c c c}
\hline\hline
Precedence Constraints & Lin and Rajaraman~\cite{LinRa07} & This work \\ [0.5ex]
\hline
Independent & $\bigo{\log(n)}$ & $\bigo{\log \log(\min\set{m,n})}$ \\
Disjoint Chains & $\bigo{\frac{\log (m) \log (n) \log(n+m)}{\log \log(n+m)}}$ & $\bigo{\log(n+m) \log\log(\min\set{m,n})}$ \\
Directed Forests & $\bigo{\frac{\log (m) \log^2 (n) \log(n+m)}{\log \log(n+m)}}$ & $\bigo{\log(n+m)\log (n) \log\log(\min\set{m,n})}$ \\ [1ex]
\hline
\end{tabular}
\caption{Improved approximation ratios}
\label{table:comparison}
\end{table}
%\end{figure}

We give improved approximation algorithms for \suu when jobs are
independent (there are no precedence constraints), and when the
precedence constraints form disjoint chains.  For independent jobs, we
give an $\bigo{\log\log(\min\set{m,n})}$-approximation algorithm.  One
component of this algorithm is based on an LP relaxation.  

Our analysis for independent jobs relies on a competitive
analysis~\cite{SleatorTa85}.  Essentially, we show that our algorithm
is $O(\log(p_{\max}/p_{\min}))$-competitive for a deterministic
scheduling problem (similar to $R|\id{pmtn}|C_{\max}$) in which each
machine has a deterministic speed, but processing times for jobs 
are chosen arbitrarily by an adversary, with 
minimum and maximum values $p_{\min}$ and $p_{\max}$.  
This competitive result is interesting in its own right.

When the precedence constraints form a collection of disjoint chains,
we have an $O(\log(n+m) \log\log(\min\set{m,n}))$-approximation
algorithm.  Our disjoint-chains algorithm uses an LP relaxation similar
to the one used for independent jobs.  We also apply techniques from
network-flow theory and prior work on the SUU problem for
chains~\cite{LinRa07}.  The $\log\log(\min\set{m,n})$ factor arises
from the independent-jobs algorithm --- therefore improving that algorithm
immediately yields a better algorithm for chains.

Our algorithm for disjoint chains can be extended to yield an
$O(\log(n+m)\log(n) \log\log(\min\set{m,n}))$-approximation for
directed forests using the chain-decomposition techniques
of~\cite{KumarMaPa05,LinRa07}.

We also show how to apply our algorithms to similar variants in the
problem of stochastic scheduling, where jobs have stochastic
processing times.  To the best of our knowledge, these are the first
approximation algorithms for stochastic scheduling with the
expected-completion-time objective and \emph{unrelated} machines.

\subheading{Paper organization} 

In \secref{prelim} we give formal definitions of the \suu problem, the
scheduling algorithms that we apply to it, and an equivalent
formulation of \suu that plays an important role in our approximation
algorithms.  A full treatment of this reformulation is given in
\appref{reform}.  \secref{indep} presents our algorithms for
independent jobs, and \secref{constr} shows how to extend them to
handle precedence constraints forming chains.  We defer tree-like
precedence constraints and stochastic scheduling to
\appreftwo{trees}{stoch}, respectively.

% LocalWords: makespan timestep
\secput{prelim}{Preliminaries}

In this section, we give a formal statement of our problem and then
define what we mean by a schedule.  Most of our notation is
consistent with that of Malewicz~\cite{Malewicz05} or Lin and
Rajaraman~\cite{LinRa07}. We then present a reformulation of the \suu
problem, which we use in subsequent sections to simplify both our algorithms
 and the analysis involved.

\punt{ One minor difference is that we use
``failure probabilities'' $q_{ij}$ in lieu of  ``success
probabilities'' $p_{ij}$. }

\subheading{The \suu problem}

An instance $I=(J,M,\set{q_{ij}},G)$ of the \suu problem includes
a set $J$ of unit-step jobs and a set $M$ of machines.  Throughout
this paper, we let $n = \card{J}$ be the number of jobs and
$m=\card{M}$ be the number of machines.  For each machine $i$ and job
$j$, we are given a \defn{failure probability} $q_{ij}$, which is the
probability that job $j$ \emph{does not} complete when run on machine
$i$ for one unit step; these probabilities are independent.  Without
loss of generality, we assume that for each job $j$, there exists a
machine $i$ such that $q_{ij} < 1$.

An \suu instance also includes a set of precedence constraints
comprising a directed acyclic graph (dag)~$G$ with jobs as vertices.
We say that a job $j$ is \defn{eligible} for execution at time $t$ if
all jobs preceding $j$ (i.e, jobs having a directed path to $j$) in
the dag have successfully completed before time~$t$.  If a job~$j$ is
eligible at time~$t$, a schedule may assign multiple machines
$M_{j,t} \subseteq M$ to execute $j$ in parallel.  As all
machine/job failures are independent, the probability that $j$
\emph{does not} complete in that timestep is $\prod_{i \in M_{j,t}}
q_{ij}$.

Failure probabilities are difficult to work with because they
multiply.  Instead, we define the \defn{log failure} of job $j$ on
machine $i$, denoted by $\ell_{ij}$, as $\ell_{ij} = -\log{q_{ij}}$.
Here and throughout the paper, we use $\log$ to mean a base-$2$
$\log$.  Note that by definition, $q_{ij} = 1/2^{\ell_{ij}}$, and
hence \vspace{-.5em} $\prod_{i \in M_{j,t}} q_{ij} = 1/2^{\sum_{i
    \in M_{j,t}} \ell_{ij}}$.  We define the \defn{log mass} of an
assignment to job $j$ in step~$t$ be the sum of the log failures,
given by $\sum_{i\in M_{j,t}} \ell_{ij}$.  We also use log mass to
refer to the sum of log masses across multiple timesteps; that is, the
log mass accrued from step $t_1$ through step $t_2$ is given by $\sum_{t =
  t_1}^{t_2} \sum_{i\in M_{j,t}} \ell_{ij}$.

Our work focuses on finding scheduling algorithms that minimize
expected makespans for restricted classes of precedence constraints.
If there are no precedence constraints, we say that the jobs are
\defn{independent}, and refer to the problem as \defn{\suui}.  When
the precedence constraints form a collection of chains, we call the
problem \defn{\suuc}.  When the constraints form a collection of
disjoint trees, we call the problem \defn{\suut}.  We use sans serif
fonts to refer to problem variants, whereas serif fonts refer to
algorithms/schedules for the problem.

\subheading{Schedules}

A \defn{schedule} $\Sigma$ is a policy for assigning machines to
(uncompleted) jobs.  Jobs must be scheduled at a unit granularity, but
the schedule may assign multiple machines to the same job.  A schedule
may base its decisions on any of its history, but we concern ourselves
with only schedules that can be computed in polynomial time.  More
formally, a schedule is a function $\Sigma : (H\times \naturals)
\rightarrow (M \rightarrow J \cup \set{\bot})$ that, given a
history\footnote{A full history for a deterministic schedule can be
  captured by the sets of remaining jobs at each timestep prior to the
  current timestep~$t$.  More formally, let $H_t =
  \set{\ang{S_1,S_2,\ldots,S_t}| J=S_1 \supseteq S_2 \supseteq \cdots
    \supseteq S_t}$ denote the set of all feasible ordered sets of
  remaining jobs at timesteps $1,2,\ldots,t$.  Then valid histories
  are given by the set $H = \bigcup_{t=1}^\infty H_t$.  Note that
  compact representations of the history exist, so a polynomially
  computable schedule may consider the entire history.} $h \in H$ and
time $t\in\naturals$, returns a function assigning machines to jobs.
\punt{ \footnote{Previous definitions~\cite{Malewicz05,LinRa07} define
    the simpler $\Sigma:(S\subseteq J) \rightarrow (M\rightarrow
    J\cup\set{\bot})$, but this definition excludes reasonable
    scheduling policies.}  } We use the symbol $\bot$ to indicate that
the machine remains idle.  To allow for more concise schedules, the
assignment function returned by $\Sigma(h,t)$ may map a machine to a
job that has already completed.

We define an \defn{execution} of $\Sigma$ as follows.  Suppose that
$h\in H$ is the history of the execution up to time~$t$.  Then $\Sigma$
assigns machine $i$ to job $j=\Sigma(h,t)(i)$ at step~$t$.  If $j$ has
been completed when it is scheduled to run, $i$ is assigned to $\bot$.
Since $J$, $M$, $\set{q_{ij}}$, and $G$ are invariant over a problem
instance, we allow $\Sigma$ to reference those implicitly.

Whenever the schedule $\Sigma$ is such that it assigns machines to
jobs depending only on the current time and the initial set of jobs,
not the jobs that have completed (i.e., for all~$t$, $\Sigma(h,t) =
\Sigma(h',t)$ for all $h,h'\in H$), we say that the schedule is
\defn{oblivious}.  An oblivious schedule has finite length if it is
only defined for $t\leq t_o$, for some $t_o$.

We say that a schedule is \defn{semioblivious} if it can be decomposed
into ``rounds'' such that the assignments within each round are
characterized by finite oblivious schedules.  Thus, while executing a
step contained in a particular round, the assignment of machines to
jobs depends only on the initial set of jobs when the round began and
the number of steps the round has been running.

\punt{
Note that all
oblivious schedules are semioblivious, whereas the converse is not
necessarily true.  Our schedule for \suui is semioblivious.
}

\punt{
Malewicz~\cite{Malewicz05} and Lin and Rajaraman~\cite{LinRa07} also
define \defn{regimens} as schedules having assignment of machines to
jobs dependent only on the subset of jobs remaining.  Although
regimens appear to be the most intuitive form of schedules for the
\suu problem, none of the schedules given in this paper are regimens.
}

We let $\Tsig$ be a random variable denoting the length of the
execution of schedule $\Sigma$, which is the number of steps before
all jobs have completed.  Our objective is to minimize $\expect{\Tsig}$
(denoted by $\expect{C_{\max}}$ in much of the scheduling literature).  We
refer to a schedule that has minimum expected makespan as $\Sigma_{\rm
  OPT}$, and its expected makespan, which is finite~\cite{Malewicz05},
as $\expect{\Topt}$.  For any \suu instance, $\Sigma_{\rm OPT}$ exists, and
can be computed (inefficiently) by selecting the assignment of jobs to
machines on a particular timestep that minimizes the expected makespan
of the remaining jobs.

In this paper, we consider schedules that are polynomial-time
computable (in $n$, $m$, and $\log \expect{\Topt}$) and whose expected
makespans approximate~$\expect{\Topt}$.  We say that $\Sigma$ is an
$\alpha$-approximation if $\expect{\Tsig} \leq \alpha \expect{\Topt}$ for all
choices of probabilities $\set{q_{ij}}$.

Throughout the remainder of this paper, we use algorithm and schedule
interchangeably.  Moreover, we generally do not give the schedule
explicitly as a function assigning machines to jobs. Instead, we
describe it algorthmically.

\subheading{Problem reformulation} 

We now describe a new, and equivalent formulation of the \suu problem,
which we refer to as \suureform.  Because of their equivalence, we
refer to both problems as \suu later in the paper.  A full treatment
is presented in \appref{reform}.

An \suureform instance $I=(J,M,\set{q_{ij}},G)$ has the same structure
as an \suu instance. The difference is that rather than considering
the success or failure of a job as it runs on machines in each
timestep, we use the Principle of Deferred Decisions~\cite{Motwani95} to view the
problem as one of deterministically scheduling jobs with randomly
distributed lengths.

Instead of failure probability, in \suureform we view $\ell_{ij} =
-\log q_{ij}$ as an amount of ``work'' that a machine does towards a
job completion in each unit timestep.  As in \suu, machines must be
scheduled at a unit granularity.  At the start of a schedule's
execution, we draw for each job $j$ a single random variable $r_j$
chosen uniformly at random from the $(0,1)$ interval.  A job $j$
completes when the total work done (or,``log mass'' accrued) on $j$
exceeds $-\log r_j$.  That is, $j$ completes at the first step~$t$ in
which $\sum_{k=1}^t \sum_{i \in M_{j,k}} \ell_{ij} \geq -\log r_j$.
As schedules are oblivious to these $r_j$, they behave the same way on
\suu and \suureform instances.

\secput{indep}{Independent Jobs}

This section describes an $O(\log\log{n})$-approximation algorithm for
\suui, the \suu problem with independent jobs.  We first give an
oblivious $O(\log{n})$-approximation algorithm for \suui, based on
scheduling an (approximation of an) integer linear program.  We then
modify this algorithm into a semioblivious solution consisting of
$O(\log\log n)$ nearly optimal phases.  

\subheading{An oblivious $O(\log{n})$-approximation}

We now describe an $O(\log{n})$-approximation for \suui, called
\ALGobl. 
\punt{which represents an improvement over the oblivious
$O(\log{n}\log(\min\set{m,n}))$-approximation given in~\cite{LinRa07}.}
Our approach constructs a schedule of length
$O(\expect{\Topt})$, based on an integer linear program, such that each job
has no more than a constant probability of failure upon completion.
This finite oblivious schedule is repeated until all jobs have completed. 
Using Chernoff bounds, we conclude that the expected number of repetitions is 
$O(\log n)$,  yielding an $O(\log n)$-approximation.

We use the following integer linear program for \ALGobl.  Let $x_{ij}$
denote the number of steps during which machine $i$ is assigned to
job~$j$.  Recall $\ell_{ij} = -\log q_{ij}$ is the log failure of
job~$j$ on machine~$i$.  Let $L$ be a fixed positive real,
representing a target log mass for each job, and let $J' \subseteq J$
be a subset of jobs that need to achieve that log mass.  For now,
think of $L$ as being fixed at $L=1/2$, and $J'=J$.  We assign
different values to $L$ later for the semioblivious $O(\log\log
n)$-approximation.
\begin{eqnarray}
  \lplabel{lp1}\hspace{1cm} \min t \nonumber \\ 
  \mbox{s.t. $\sum_{i\in M} \ell_{ij}x_{ij}$} & \geq & L \;\;\; \forall j \in J' \label{eq:lp1aggfailure}\\
  \mbox{$\sum_{j\in J} x_{ij}$} & \leq & t \;\;\; \forall i \in M
  \label{eq:lp1load} \\
  x_{ij} & \in & \naturals\cup\set{0} \;\;\; \forall i\in M, j\in J \label{eq:lp1int}\ .
\end{eqnarray}
Here \eqref{lp1aggfailure} enforces that every
job in $J$ has a failure probability no greater than
$1/\sqrt{2}$, and 
\punt{
\footnote{Any positive constant works here.  A constant
  in the constraint strictly less than $1$ (corresponding to a failure
  probability strictly greater than $1/2$) simplifies the proof of
  \lemref{lplower}.}
  }
\eqref{lp1int} guarantees that all jobs
are scheduled for an integral number of steps on each machine.
We use \lpref{lp1} to refer to this integer linear program
generically, and $\lpfunc{lp1}(J',L)$ to refer to it with particular 
values of $J'$ and $L$.  We denote the optimal value for $\lpfunc{lp1}(J',L)$
by $t_{\lpfunc{lp1}(J',L)}$.

A solution for $\lpfunc{lp1}(J',L)$ naturally generalizes to a finite
oblivious schedule, denoted by $\Sigma_{\lpfunc{lp1}(J',L)}$, with
length $t_{\lpfunc{lp1}(J',L)}$ as follows.  Consider a machine~$i$,
and consider each job~$j$ in arbitrary order.  Assign machine~$i$ to
job~$j$ for $x_{ij}$ timesteps. To finish our description of this schedule, 
we first claim that $t_{\lpfunc{lp1}(J,1/2)}$ approximates $\expect{\Topt}$ 
(the proof appears in \appref{app}).
Then we show how to approximate \lpref{lp1} in polynomial time.

\setthmcount{lplower}
\newcommand{\lplowerthm}{
\begin{lemma}\lemlabel{lplower}
  $t_{\lpfunc{lp1}(J,1/2)} = O(\expect{\Topt})$
\end{lemma}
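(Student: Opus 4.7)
The plan is to exhibit an integer feasible solution to $\lpfunc{lp1}(J,1/2)$ with objective $O(\expect{\Topt})$, derived from the optimal schedule $\Sigma_{\rm OPT}$. I would work in the \suureform formulation, where each job $j$ carries an independent threshold $-\log r_j$ with $r_j$ uniform on $(0,1)$, and $j$ completes once its accumulated log mass first reaches this threshold.

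First, I would extract a fractional certificate from $\Sigma_{\rm OPT}$. Let $Y_{ij}$ be the random number of timesteps $\Sigma_{\rm OPT}$ assigns machine $i$ to job $j$, and define $y_{ij} = \expect{Y_{ij}}$. Two pointwise facts justify that $(y_{ij})$ is feasible for the LP relaxation of $\lpfunc{lp1}(J,1/2)$ with objective $\expect{\Topt}$: (i) each machine works for at most $\Topt$ steps per realization, so $\sum_j y_{ij} \le \expect{\Topt}$; and (ii) every job eventually accumulates log mass at least $-\log r_j$, so $\sum_i \ell_{ij} Y_{ij} \ge -\log r_j$ pointwise, which by linearity of expectation gives $\sum_i \ell_{ij} y_{ij} \ge \expect{-\log r_j} = 1/\ln 2 > 1/2$.

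The main obstacle is rounding $(y_{ij})$ to an integer feasible solution with only a constant-factor loss. The naive ceiling $x_{ij} = \lceil y_{ij} \rceil$ could add up to $n$ to some machine's load and is prohibitive. Instead, I would replace $(y_{ij})$ with a basic feasible solution of the fractional relaxation, which has at most $n+m$ nonzero variables and a bipartite support with forest-like structure, in the same spirit as the Shmoys--Tardos rounding for unrelated parallel machine scheduling ($R||C_{\max}$). Scaling that solution by a small constant to create covering slack and then rounding along this support yields an integer solution whose per-machine load is at most a constant times the fractional load, while every job's log mass remains $\ge 1/2$.

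The delicate case is when $\expect{\Topt}$ is small (say $O(1)$), since then the additive inflation from a single ``large'' fractional assignment must itself be $O(\expect{\Topt})$. In that regime, however, every job must have some machine with $\ell_{ij} = \Omega(1)$ (else no short expected completion is possible), so a single timestep on that machine already contributes substantial log mass, and one can instead assign each job to one carefully chosen machine without overloading. Combining the two regimes gives $t_{\lpfunc{lp1}(J,1/2)} = O(\expect{\Topt})$, completing the proof.
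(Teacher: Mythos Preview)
Your approach differs substantially from the paper's, and its core step---extracting a fractional feasible point $(y_{ij}, \expect{\Topt})$ by averaging the optimal schedule's allocations---is correct and clean: the load bound $\sum_j y_{ij}\le \expect{\Topt}$ and the mass bound $\sum_i \ell_{ij} y_{ij} \ge \expect{-\log r_j} = 1/\ln 2 > 1/2$ both follow by linearity of expectation. The gap is in the rounding. Shmoys--Tardos style rounding exploits the structure of \emph{assignment} LPs (constraints of the form $\sum_i x_{ij}=1$); here you face a covering constraint $\sum_i \ell_{ij} x_{ij}\ge 1/2$ with unbounded coefficients, and basic-feasible-solution sparsity alone does not prevent one machine from holding many tiny fractional $x_{ij}$'s (take some $\ell_{ij}$ arbitrarily large), so rounding up along the support can inflate that machine's load by an unbounded factor. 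Your ``delicate case'' paragraph does not close this. The fix the paper uses (in \lemref{lp1rounding}) is to first cap each $\ell_{ij}$ at $L$---which leaves integer feasibility unchanged but forces $\sum_i x_{ij}\ge 1$ in the relaxation---and then apply a flow-based rounding. You could simply invoke \lemref{lp1rounding} after your fractional bound and be done.

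For comparison, the paper bypasses rounding entirely via a symmetry argument over random subsets. Let $U$ be the set of jobs whose threshold $-\log r_j$ exceeds a fixed constant; each job lies in $U$ independently with probability $1/2$, so $U$ and $\overline U$ are identically (uniformly) distributed over subsets of $J$. For any realization, OPT's integer allocations are already feasible for $\lpfunc{lp1}(U,1/2)$, hence $\Topt \ge t_{\lpfunc{lp1}(U,1/2)}$ pointwise. Averaging and using $t_{\lpfunc{lp1}(U,1/2)}+t_{\lpfunc{lp1}(\overline U,1/2)}\ge t_{\lpfunc{lp1}(J,1/2)}$ yields $\expect{\Topt}\ge \tfrac12\, t_{\lpfunc{lp1}(J,1/2)}$ directly for the integer program. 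Your averaging route is more mechanical and perhaps more portable; the paper's is self-contained and does not lean on the separate rounding lemma.
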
}
\lplowerthm
\newcommand{\lplowerproof}{
\begin{proof}
%%  
%%DAVIDK BEGIN
%%
  Let $t$ be the optimum solution to $\lpfunc{lp1}(J,1/2)$.  Consider
  any subset $U \subseteq J$, and its complement $\overline U$.  Then
  $\lpfunc{lp1}(U,1/2) + \lpfunc{lp1}(\overline U,1/2) \ge t$,
  since we can construct a solution to $\lpfunc{lp1}(J,1/2)$ by adding
  a solution to $\lpfunc{lp1}(U,1/2)$ and a solution to
  $\lpfunc{lp1}(\overline U,1/2)$.

  Now recall our view of the problem in terms of \suureform: there is
  an $r_j$ chosen uniformly at random from $(0,1)$ for each job $j$
  such that job $j$ completes only if $\sum_{i \in
  M} \ell_{ij}x_{ij} \geq -\log r_j$.  For any sample from the event
  space, let $U$ be the set of jobs $j$ for which $r_j>1/2$, and let
  $\overline U$ be the complement set of jobs $j$ for which $r_j <
  1/2$ (note $r_j=1/2$ with probability 0 so can be ignored).  By
  definition, each job is in $U$ independently with probability $1/2$.
  Next observe that however OPT generates its schedule, it must
  allocate at least $1/2$ unit of ``work'' to each job in $U$; in
  other words, \eqref{lp1aggfailure} of \lpref{lp1} must hold for
  every $j\in U$.  Thus, the optimum schedule contains a feasible
  solution to $\lpfunc{lp1}(U,1/2)$.

  Now observe that by construction, $U$ is a uniformly random subset
  of $J$, meaning all subsets are equally likely.  Thus, 
\begin{eqnarray*}
  E[T_{\rm OPT}] &= &2^{-n} \sum_U E[T_{\rm OPT} \mid U]\\
  &= &2^{-n}\cdot \frac12 ( \sum_U E[T_{\rm OPT} \mid U] + \sum_U
  E[T_{\rm OPT} \mid \overline U])\\
  & \ge &2^{-n}\frac12 \sum_U (\lpfunc{lp1}(U,1/2) + \lpfunc{lp1}(\overline
  U,1/2))\\
  & \ge & 2^{-n}\frac12 \sum_U \lpfunc{lp1}(J,1/2)\\
  &= &\frac12 \lpfunc{lp1}(J,1/2)
\end{eqnarray*}
Where the second line of this derivation follows from the first
paragraph of this proof.
\punt{
%%DAVIDK END
  Viewing the problem in terms of \suureform, there is an $r_j$ chosen
  uniformly at random from $(0,1)$ for each job $j$ such that job $j$
  completes only if $\sum_{i \in M} \ell_{ij}x_{ij} \geq -\log r_j$.  

  To prove the claim, we first show that with constant probability, at
  least half the jobs require a constant log mass.  We then show that
  with constant probability, a random selection of $n/2$ jobs yields
  an \lpref{lp1} instance with optimal solution
  $\Omega(T_{\lpfunc{lp1}(J,1/2)})$.  We combine these two steps to
  complete the proof.
  
  Let $C_j$ be a Bernoulli random variable with value $1$ if $1/2 \geq
  -\log r_j$, or equivalently $r_j \geq 1/\sqrt{2}$.  Said
  differently, $C_j=1$ when an execution of $\lpfunc{lp1}(J,1/2)$
  would complete job~$j$.  When $C_j=0$, any execution (and in
  particular an execution of $\Sigma_{\rm OPT}$) must give job $j$ a
  log mass exceeding $1/2$.  Our goal is thus to show that a large
  fraction of jobs have $C_j=0$.  Let $C = \sum_j C_j$.  Then $\expect{C} =
  n(1-1/\sqrt{2})$.  Applying Markov's inequality gives that $\prob{C
    \geq n/2} \leq n(1-1/\sqrt{2})/(n/2) = 2-\sqrt{2} < 1$.  Thus,
  with constant probability at least $c=1-(2-\sqrt{2})$, any schedule
  (and in particular $\Sigma_{\rm OPT}$) must assign machines to jobs
  such that at least half of them have log mass exceeding~$1/2$ (and
  hence satisfy \eqref{lp1aggfailure}).
  
  Let $t^* = t_{\lpfunc{lp1}(J,1/2)}$, and consider any uniformly
  random selection $J_r\subset J$ of $n/2$ jobs.\footnote{For
    simplicity here, without loss of generality, we assume $n$ is a
    multiple of $2$.}  Then either $t_{\lpfunc{lp1}(J_r,1/2)} > t^*/2$
  or $t_{\lpfunc{lp1}(J-J_r,1/2)} > t^*/2$, since solving the linear
  program for $J_r$ then $J-J_r$ solves it for $J$.  We observe that
  since $\card{J_r} = \card{J-J_r} = n/2$, both $J_r$ and $J-J_r$ are
  equalLy likely to be selected as the random subset.  Thus, with
  probability at least $1/2$, we have $t_{\lpfunc{lp1}(J_r,1/2)} > t^*/2$
  
  We thus have that with probability at least $c/2$, $\Sigma_{\rm
    OPT}$ must schedule the jobs such that at least half of them
  satisfy \eqref{lp1aggfailure}, and such that \lpref{lp1} applied to
  those jobs has length at least $t^*/2$.  We therefore conclude that
  $\expect{\Topt} \geq (c/2) t^*/2$, and hence $t_{\lpref{lp1}}(J,1/2) =
  O(\expect{\Topt})$.
}
\end{proof}
}

The following lemma states that, in polynomial time, we can find an integral
assignment that approximates \lpref{lp1} to within a constant factor.
Some aspects of the proof are similar to~\cite[Theorem 4.1]{LinRa07},
\punt{which solves a slightly different problem useful for precedence
constraints that form disjoint chains,} but we add several steps \punt{(made
possible by our formulation involving log failure)} that improve the
approximation ratio.  
\punt{
Our tighter approximation \emph{does} apply to
the more general disjoint-chains variant, which we revisit in
\secref{constr}.
}

\begin{lemma}\lemlabel{lp1rounding}
  %\note{Jer:should we say poly in what? i.e., $\poly(n,m,\log{Lt})$}
  There exists a polynomial-time algorithm that computes a feasible
  solution to $\lpfunc{lp1}(J',L)$ having value
  $O(t_{\lpfunc{lp1}(J',L)})$.
\end{lemma}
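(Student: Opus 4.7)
The plan is to relax the integrality constraint and round carefully. First, I would solve the LP relaxation of $\lpfunc{lp1}(J',L)$ (replace constraint~\eqref{eq:lp1int} with $x_{ij} \geq 0$) in polynomial time via any standard LP algorithm, obtaining a fractional optimum $x^*$ of value $t^* \leq t_{\lpfunc{lp1}(J',L)}$. Since the LP is a relaxation, it suffices to exhibit an integral feasible solution with value $O(t^*)$.

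Before rounding, I would preprocess. For every pair $(i,j)$ with $\ell_{ij} \geq L$, observe that a single unit assignment $x_{ij}=1$ already satisfies job~$j$'s coverage constraint~\eqref{eq:lp1aggfailure}, so I can truncate $x_{ij}^* \leftarrow \min(x_{ij}^*,1)$ without affecting feasibility or the objective. After this truncation every nonzero fractional component $x_{ij}^* - \floor{x_{ij}^*}$ contributes at most $O(L)$ units of log mass to its job. I would then pass to an extreme-point optimum of the relaxed LP; since \lpref{lp1} has only $O(n+m)$ nontrivial constraints, standard polyhedral theory ensures that at most $n+m$ of the $x_{ij}^*$ are strictly fractional.

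Next, round as follows. Let $\bar{x}_{ij} = \floor{x_{ij}^*}$; this integral partial assignment already satisfies the load constraint~\eqref{eq:lp1load} with value at most $t^*$ per machine, while the coverage deficit for job~$j$ is at most $\sum_{i\,:\,x_{ij}^* \notin \naturals} \ell_{ij}(x_{ij}^* - \floor{x_{ij}^*})$. To repair the deficits, I would construct a bipartite graph on jobs and machines containing one edge per strictly-fractional pair, exploit the forest-like structure of extreme points in the spirit of Lenstra--Shmoys--Tardos and~\cite[Theorem~4.1]{LinRa07}, and route each still-deficient job to a single machine which rounds its $x_{ij}^*$ up to $\ceil{x_{ij}^*}$. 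This charges at most one extra unit of load per machine and covers each remaining deficit in full, yielding a feasible integer assignment of value at most $t^* + O(1) = O(t_{\lpfunc{lp1}(J',L)})$.

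The main obstacle will be the matching-style step that repairs the residual coverage deficits: unlike the classical unrelated-machines makespan setting, where each job is placed on exactly one machine, here a job's coverage can be aggregated from many small fractional contributions, so I need the truncation bound on $\ell_{ij}$ together with the extreme-point sparsity to guarantee that rounding a single fractional $x_{ij}^*$ per deficient job up to $\ceil{x_{ij}^*}$ suffices to close the gap. The improvement over~\cite[Theorem~4.1]{LinRa07} will stem from exploiting the log-failure reformulation to bound the per-job residual mass directly by $O(L)$, avoiding the extra logarithmic factor that their more generic covering-style argument incurs.
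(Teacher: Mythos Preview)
Your proposal has a genuine gap in the repair step. The Lenstra--Shmoys--Tardos matching works for $R\,\|\,C_{\max}$ because each job's constraint is $\sum_i x_{ij} = 1$: rounding any one fractional $x_{ij}$ up to $1$ immediately satisfies it. In \lpref{lp1} the constraint is $\sum_i \ell_{ij} x_{ij} \geq L$, so rounding a single fractional $x^*_{ij}$ up to $\ceil{x^*_{ij}}$ adds at most $\ell_{ij}$ units of log mass, which can be far smaller than the residual deficit. Concretely, take one job with $L=10$ and twenty machines each having $\ell_{ij}=1$; the extreme-point LP optimum is $x^*_{ij}=1/2$ for all $i$ with $t^*=1/2$. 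Rounding every variable down gives zero log mass, and bumping any single entry up to $1$ recovers only one unit out of the ten needed. Your truncation of $x^*_{ij}$ when $\ell_{ij}\geq L$ does nothing here (all $\ell_{ij}<L$), and extreme-point sparsity does not prevent a single job from being incident to many small-$\ell$ machines in the fractional-support forest. You flag exactly this step as ``the main obstacle'' but then assert it goes through; it does not.

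The paper's argument is structurally different. It first replaces $\ell_{ij}$ by $\ell'_{ij}=\min\{\ell_{ij},L\}$ \emph{before} relaxing---harmless for integral solutions, but it prevents the LP from covering a job with a tiny fraction of a huge-$\ell$ machine (note this truncates $\ell$, not $x$). It then groups the machines for each job into dyadic buckets by $\ell'_{ij}$, scales each bucket total $D^*_{jk}$ by $6$ and floors it, and shows via the geometric bound $\sum_{k\leq\lfloor\log L\rfloor} 2^k \leq 2L$ that flooring loses at most $2L$ mass while the scaling gains at least $3L$. Finally it builds a flow network with integral capacities (source $\to$ bucket nodes $\to$ machines $\to$ sink) and invokes integrality of max flow to distribute the rounded bucket totals back to individual machines without exceeding load $\ceil{6t^*}$. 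This grouping-plus-flow step, not an LST-style matching, is the missing ingredient in your plan.
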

\begin{proof}
  We relax our integer linear program to a linear program, and then
  show that the relaxed LP can be rounded to yield an integral
  $\set{\widehat{x_{ij}}}$ solution with value
  $O(T_{\lpfunc{lp1}(J',L)})$.
  
  First, let $\ell'_{ij} = \min\set{\ell_{ij},L}$.  Then we replace
  each $\ell_{ij}$ in \eqref{lp1aggfailure} with $\ell'_{ij}$,
  yielding the constraint $\sum_{i\in M} \ell'_{ij} x_{ij} \geq L,
  \forall j\in J'$.  Note that since assignments are restricted to be
  integral, this change has no effect on either the feasibility or the
  value of an assignment.  Next we remove \eqref{lp1int} and solve the
  relaxed linear program.\punt{ \footnote{As far as the relaxed
      linear program is concerned, the change from $\ell_{ij}$ to
      $\ell'_{ij}$ may have a big effect on the feasibility of
      solutions.  In particular, suppose there exists a machine~$i$
      such that $\ell_{ij}=Ln$ for all $j$.  Then an LP based on
      $\ell_{ij}$ may assign all jobs to machine~$i$ for $1/n$ time
      steps.}  }  Letting $\set{x^*_{ij}, t^*}$ be an optimal
  solution, we note that $t^* \leq t_{\lpfunc{lp1}(J',L)}$, because
  integral solutions are feasible.
  
  Our goal now is to round the LP solution to an integral solution,
  while not increasing its value by very much.  We 
  proceed in three steps.  First, we group machines having
  similar $\ell'_{ij}$ for a job~$j$, yielding a single assignment for the
  whole group.  Then, we round those assignments to integers.
  Finally, we show that the rounded assignments satisfy
   \lpref{lp1}, using an integral flow network.
  
   For each job $j$, we group machines having $\ell'_{ij}$ values
   within a factor of 2, and determine the total assignment to that
   group.  More formally, for each $j$ and integer $k$, we let
   $D^*_{jk} = \sum_{i:\floor{\log \ell'_{ij}} = k}
   x^*_{ij}$\vspace{-.2em} be the total assignment of machines with
   $\ell'_{ij} \in [2^k,2^{k+1})$ to job~$j$.\punt{ \footnote{Note
       that there are at most $mn$ values of $k$ for which $D^*_{jk}
       \neq 0$.}  }  It should be clear that $\sum_{i \in M}
   \ell'_{ij} x^*_{ij} \geq \sum_k D^*_{jk} 2^k \geq L/2$, for all
   $j\in J'$.
  
  We next round the value of $D^*_{jk}$ up to $\floor{6D^*_{jk}}$.  We
  claim that $\sum_k \floor{6D^*_{jk}} 2^k \geq L, \forall j\in J'$.
  Note that since $\ell'_{ij} \not> L$, the maximum value of $k$
  having nonzero $D^*_{jk}$ is $\floor{\log L}$.  Thus, the claim
  follows because $\sum_k \floor{6D^*_{jk}} 2^k \geq 3 (2\sum_k
  D^*_{jk} 2^k) - (\sum_{k\leq \log{L}} 2^k) \geq 3 (L)
  - (\sum_{k=0}^{\infty} L / 2^k) \geq 3L - 2L = L$.  In other words,
  rounding the group assignments down to integers can only cause us to
  lose a log mass of at most $2L$.  We thus need an assignment giving
  $3L$ log mass to the job.  
  \punt{The constant $6$ appears because we lose
  a factor of $2$ in our groupings.}
  
  To complete the integral assignment, we construct a network-flow
  instance as follows.  For each job~$j$ and integer $k$, we have a
  node $u_{jk}$.  For each machine~$i$, we have a node $v_i$.  We also
  add a source-node~$s$ and a sink-node~$w$.  For each $u_{jk}$, we
  add a directed edge $(s,u_{jk})$ with capacity $\floor{6D^*_{jk}}$.
  For each $v_i$, we add a directed edge $(v_i,w)$ with capacity
  $\ceil{6t^*}$.  Finally, we add a directed edge $(u_{jk},v_i)$ with
  infinite capacity, for any $j,k,i$ such that $\floor{\log
    \ell'_{ij}} = k$.  Note that for a given~$j$ and~$i$, there is
  exactly one $k$ such that $(u_{jk},v_i)$ exists.  We refer to this
  edge as edge $(j,i)$.
   
   Note that if we make the capacity of edges $(s,u_{jk})$ be
  $6D^*_{jk}$ instead, then a flow of
  demand $\sum_{jk} 6D^*_{jk}$ exists in this network.
  \punt{
  , and it can be constructed by
  setting the flow along edge $(j,i)$ to be $6x^*_{ij}$ (and flow
  along edge $(v_i,w)$ to be $6\sum_{j\in J} x^*_{ij}$).
  }  
  Thus, a flow of capacity $\sum_{jk} \floor{6D^*_{jk}}$ exists 
  when we lower the capacity of edge $(s,u_{jk})$ to $\floor{6D^*_{jk}}$.
  
  Ford-Fulkerson's theorem~\cite{FordFu62,CormenLeRi01} states that an
  integral max flow exists whenever the capacities are integral, as
  they are here.  We therefore take the flow across the edges $(j,i)$
  as our integral assignments $\widehat{x_{ij}}$.  Moreover, by
  construction, $\widehat{x_{ij}}$ satisfy $\sum_{j\in J}
  \widehat{x_{ij}} \leq \ceil{6t^*}$ for all $i\in M$, and $\sum_{i
    \in M} \ell'_{ij}\widehat{x_{ij}} \geq \sum_k \floor{6D^*_{jk}}
  2^k \geq L$ for all $j \in J$.  We thus have an integral feasible
  solution $\set{\widehat{x_{ij}},6t^*}$.  Noting that $6t^* \leq
  6T_{\lpfunc{lp1}(J',L)}$ completes the proof.
\end{proof}

Recall that \lemref{lplower} shows that $t_{\lpfunc{lp1}(J,1/2)} =
O(\expect{\Topt})$.  Then \lemreftwo{lplower}{lp1rounding} in concert state
that in polynomial time, we can find a schedule $\Sigma$ of length
$O(\expect{ \Topt })$, such that every job has at most a constant
probability of failure. Repeating $\Sigma$ until all jobs complete
gives our oblivious schedule \ALGobl.  The proof appears in \appref{app}.
\setthmcount{indbound}
\newcommand{\indboundthm}{
\begin{theorem}\thmlabel{indbound}
  Let $T_{\ALGobl}$ denote the random variable corresponding to the
  amount of time it takes for an execution of \ALGobl to complete all
  jobs.  Then $\expect{T_{\ALGobl}} = O(\expect{\Topt} \log n)$.
\end{theorem}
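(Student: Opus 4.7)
The plan is to combine \lemref{lplower} and \lemref{lp1rounding} to extract a single short oblivious block whose length bounds $\expect{\Topt}$ up to constants and under which every remaining job has a constant per-block failure probability, and then bound the number of times we must repeat that block before every job completes.

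First, I would set $L=1/2$ and apply \lemref{lp1rounding} to $\lpfunc{lp1}(J,1/2)$ to obtain, in polynomial time, an integral assignment $\{\widehat{x_{ij}}\}$ of length $T = O(t_{\lpfunc{lp1}(J,1/2)})$; by \lemref{lplower}, $T = O(\expect{\Topt})$. Turning this assignment into a finite oblivious schedule $\Sigma$ in the manner described just before \lemref{lplower}, every job $j$ receives log mass $\sum_{i\in M}\ell_{ij}\widehat{x_{ij}} \ge 1/2$ during one execution of $\Sigma$, so its probability of still being incomplete at the end of that block is at most $2^{-1/2} = 1/\sqrt 2$. I would then define \ALGobl as the schedule that executes $\Sigma$ repeatedly, ignoring machines whose job has already completed, until the job set is empty.

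Next, I would analyze the number of repetitions $N$ required. Because failures across disjoint timesteps are independent (and, equivalently in the \suureform view, the draws $r_j$ are fixed once but the per-block log-mass accrual is additive across repetitions), after $k$ executions of $\Sigma$ any fixed job $j$ remains uncompleted with probability at most $(1/\sqrt 2)^{k} = 2^{-k/2}$. A union bound over the $n$ jobs gives $\prob{N > k} \le n\cdot 2^{-k/2}$. Hence
\begin{eqnarray*}
\expect{N} \;=\; \sum_{k\ge 0}\prob{N>k} \;\le\; 2\log n \;+\; \sum_{k > 2\log n} n\cdot 2^{-k/2} \;=\; O(\log n).
\end{eqnarray*}

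Finally, since each execution of $\Sigma$ takes exactly $T = O(\expect{\Topt})$ steps regardless of which jobs have completed, the total running time of \ALGobl is $N\cdot T$, so $\expect{T_{\ALGobl}} = \expect{N}\cdot T = O(\expect{\Topt}\log n)$, proving the theorem. I do not anticipate a real obstacle here: the two lemmas already do the heavy lifting (LP lower bound and integral rounding), and the remaining argument is just the standard ``constant success probability, so $O(\log n)$ independent trials suffice by a union bound'' calculation; the only thing to be slightly careful about is that per-block failures really are independent across blocks, which is immediate because each block acts on a fresh set of timesteps with independent $q_{ij}$ trials (or, in the \suureform picture, because the accumulated log mass after $k$ blocks is simply $k$ times the per-block log mass, which is compared against the once-drawn threshold $-\log r_j$).
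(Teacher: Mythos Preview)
Your proof is correct and follows essentially the same approach as the paper: combine \lemref{lplower} and \lemref{lp1rounding} to get an oblivious block of length $O(\expect{\Topt})$ with constant per-job failure probability, then use a union-bounded geometric tail to show that $O(\log n)$ repetitions suffice in expectation. The paper phrases the tail step as ``applying a Chernoff bound,'' but the content is the same as your direct tail-sum computation (and arguably your version is cleaner, since only a single success is needed per job).
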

}
\indboundthm
\newcommand{\indboundproof}{
\begin{proof}
  From \lemreftwo{lplower}{lp1rounding}, we have a schedule $\Sigma$
  of length $O(\expect{ \Topt } )$ that gives each job a constant
  probability of success.  Applying a Chernoff bound gives us that a
  particular job completes in $O(\log n)$ repetitions of $\Sigma$,
  with probability at least $1-1/n^{\Theta(1)}$, where the constant
  exponent appears as a constant factor in the number of repetitions.
  Taking a union bound over all jobs gives that with probability at
  least $1-1/n^{\Theta(1)}$, \emph{all} jobs complete in $O(\log n)$
  repetitions.  Since this probability drops off dramatically as the
  number of repetitions increases, we have $\expect{T_{\ALGobl}} =
  O(\expect{\Topt}\log n)$.
\end{proof}
}\vspace{-.5em}
\subheading{A semioblivious $O(\log\log(\min\set{m,n}))$-approximation}

We construct our semioblivious schedule \ALGsem as follows.  The
schedule is divided into ``rounds.''  The first round corresponds to
an execution of the schedule suggested by the (rounded) solution
to $\lpfunc{lp1}(J,1/2)$.  In each following round, \lpref{lp1} is
applied to all remaining jobs with doubling targets.  If  
$J_k \subseteq J_{k-1}\subseteq J$ are the set of jobs left at
the start of round $k$,  then for that round we find an approximate solution 
to $\lpfunc{lp1}(J_k,2^{k-2})$, and schedule obliviously according to it.

\ALGsem runs at most $K = \ceil{\log\log \min \set{m,n}}+3$ of these
rounds. If uncompleted jobs remain after the $K$th round,
one of two things is done.  If $n \leq m$, \ALGsem runs each job one at a time
on all machines, until all jobs are completed. If $m < n$, \ALGsem
simply repeats the schedule $\Sigma_{\lpfunc{lp1}(J_k,2^{k-2})}$ given
by the $K$th round until all jobs complete.

The following theorem states that \ALGsem achieves an $O(\log\log(\min
\set{m,n}))$-approximation.  The key aspect of our analysis is viewing
\ALGsem as an ``online algorithm'' to solve the \suureform problem
over the hidden input $\set{r_j}$.
\punt{
Essentially, \ALGsem ``discovers'' the values of $r_j$ as jobs
complete.
}
  We compare the length of \ALGsem's schedule against that of
an optimal offline algorithm, called OFF, that knows the values
$\set{r_j}$.  In particular, we show that if OFF takes total time $t$
on input $\set{r_j}$, then each round of \ALGsem takes time $O(t)$ on
the same input.  This part of our proof is essentially a competitive
analysis~\cite{SleatorTa85}.

\setthmcount{indepm}
\newcommand{\indepmthm}{
\begin{theorem} \thmlabel{indepalg}
  Let $K = \ceil{\log\log \min \set{m,n}}+3$ and let
  $T_{\ALGsem}$ denote the random variable corresponding to
  the amount of time it takes for an execution of \ALGsem to complete
  all jobs. Then $\expect{T_{\ALGsem}} = O(\expect{\Topt}\cdot K)$. 
\end{theorem}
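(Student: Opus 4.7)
The plan is a competitive analysis in the \suureform view: introduce an offline-optimal scheduler OFF that knows the hidden draws $\set{r_j}\in (0,1)^n$, and let $\Toff(\set{r_j})$ be its makespan. Since the online optimum $\Sigma_{\rm OPT}$ has strictly less information than OFF, $\Toff\le \Topt$ pointwise in $\set{r_j}$, and hence $\expect{\Toff}\le \expect{\Topt}$. I will establish three claims: (i) for each $k\ge 2$, the length of round $k$ is $O(\Toff)$ pointwise in $\set{r_j}$; (ii) the length of round $1$ is $O(\expect{\Topt})$ in expectation; and (iii) the leftover phase after round $K$ contributes only $O(\expect{\Topt})$ expected time.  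Summing yields $\expect{T_{\ALGsem}}=O(K\cdot\expect{\Topt})$.

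For claim (i), every $j\in J_k$ survived rounds $1,\ldots,k-1$, during which the oblivious schedule allocated log mass at least $\sum_{\ell=1}^{k-1}2^{\ell-2}=2^{k-2}-1/2$ on $j$; hence $-\log r_j>2^{k-2}-1/2$ for every $j\in J_k$.  Taking OFF's schedule $\set{x^*_{ij}}$, which satisfies $\sum_i\ell_{ij}x^*_{ij}\ge -\log r_j$ and $\sum_j x^*_{ij}\le \Toff$, and scaling it by $2^{k-2}/(2^{k-2}-1/2)\le 2$ (for $k\ge 2$) produces a fractional feasible solution to $\lpfunc{lp1}(J_k,2^{k-2})$ of value at most $2\Toff$.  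Combined with the rounding step of \lemref{lp1rounding}, the integral round-$k$ schedule has length $O(\Toff)$.  Claim (ii) follows directly from \lemref{lplower} combined with \lemref{lp1rounding}, giving an expected round-$1$ length of $O(\expect{\Topt})$.  Summing parts (i) and (ii) gives $\sum_{k=1}^{K}\expect{\text{round }k}=O(K\cdot \expect{\Topt})$.

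For claim (iii), the choice $K=\ceil{\log\log\min\set{m,n}}+3$ ensures every job has accumulated log mass at least $2^{K-1}-1/2\ge 3\log\min\set{m,n}$ by the end of round $K$, so each job survives independently with probability at most $\min\set{m,n}^{-3}$.  In the $n\le m$ branch, the expected number of survivors is at most $1/n^2$, and each survivor run alone on all $m$ machines completes in time at most $\Toff$ (OFF must process that job at rate no greater than $\sum_i\ell_{ij}$), so the expected leftover is absorbed into $O(\expect{\Topt})$.  In the $m<n$ branch, each additional iteration allots $2^{K-2}\ge 2\log m$ further log mass to every alive job, so the expected number of survivors after $t$ additional iterations decays as $n\cdot m^{-2(t+2)}$; combined with the per-iteration competitive bound $t_{K+t}=O(\Toff)$ from claim (i) applied to the current $J_{K+t}$, and a telescoping over $t$ that exploits the fact that $t_{K+t}$ itself scales with $\abs{J_{K+t}}$ and thus decays geometrically in expectation, the total expected extra time is $O(\expect{\Topt})$.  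The main obstacle is precisely this telescoping: the number of extra iterations is positively correlated with $\Toff$ (a small $r_j$ inflates both), so one cannot factor $\expect{Y\cdot t_{K+t}}$ as $\expect{Y}\cdot\expect{t_{K+t}}$ and must instead bound $\sum_{t\ge 1}\expect{t_{K+t}}$ directly using the exponentially shrinking $\expect{\abs{J_{K+t}}}$ together with the LP-rounding machinery of \lemref{lp1rounding}.
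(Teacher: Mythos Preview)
Your claims (i) and (ii) are correct and track the paper's argument closely.  One cosmetic difference: the paper notes that OFF's assignment is already an \emph{integral} feasible solution to $\lpfunc{lp1}(J_k,2^{k-3})$ (using only the round-$(k-1)$ log mass $2^{k-3}$ rather than the cumulative $2^{k-2}-1/2$), so $\Toff\ge t_{\lpfunc{lp1}(J_k,2^{k-3})}$ directly and no scaling-and-rerounding detour is needed.  Your route through the fractional relaxation also works.  The $n\le m$ branch of claim (iii) is likewise essentially the paper's argument; the paper phrases it as ``running one-at-a-time is an $O(n)$-approximation and $\Pr[F_K=0]\le 1/n$.''

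The genuine gap is the $m<n$ branch of claim (iii).  First, a point about the algorithm: in this branch \ALGsem does not solve fresh LPs for rounds $K+1,K+2,\ldots$; it simply \emph{repeats the fixed schedule} $\Sigma_K$ (compressing out completed jobs between repeats).  So there is no quantity ``$t_{K+t}$'' obtained from $\lpfunc{lp1}(J_{K+t},\cdot)$, and claim (i) does not apply to the tail iterations.  Second, and more seriously, without compression the number of repeats of $\Sigma_K$ required is $\Theta(\log n/\log m)$, which can swamp $K=O(\log\log m)$; so bounding the expected number of extra iterations is not enough, and the assertion that ``$t_{K+t}$ scales with $\abs{J_{K+t}}$'' is not true of the LP value in general.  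The paper's argument is different in kind: it tracks the \emph{load} $H=\max_i\sum_{j\text{ alive}}x_{ij}$ of the compressed schedule.  Since each alive job survives a repeat with probability at most $1/m^2$, the expected load on each machine shrinks by a factor $m^2$; Markov's inequality plus a union bound over the $m$ machines shows that with probability at least $1/2$ the overall load drops by a factor $m/2\ge 2$.  Hence the expected time to take the load from $T_K/2^i$ down to $T_K/2^{i+1}$ is $O(T_K/2^i)$, and the geometric sum gives $O(T_K)=O(\expect{\Topt})$ for the entire tail.  Your telescoping sketch gestures at shrinking work as jobs die, but does not identify the load-compression mechanism that makes the sum converge; that is the missing idea.
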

}
\indepmthm
\begin{proof}
  First, we show that for any fixed set of random values $\set{r_j}$,
  each round of \ALGsem takes time proportional to the optimal
  strategy~OFF.  Then, we show that the expected time to complete 
  the first $K$ rounds is $O(\expect{\Topt} \cdot K)$.  Finally, we analyze the
  cases when \ALGsem does not complete in $K$ rounds.
  
  Consider an optimal offline strategy OFF that knows the random
  values of $\set{r_j}$, and let $\Toff(\set{r_j})$ denote OFF's
  makespan given the values $\set{r_j}$ (i.e., $\Toff(\set{r_j})$ is
  the minimum over all strategies for a fixed $\set{r_j}$).
  For each $k\in\set{2,3,\ldots,K}$, let $J_k \subseteq J$ be the
  subset of jobs such that $-\log r_j > 2^{k-3}$.  Thus, OFF must
  assign machines to job~$j\in J_k$ such that
  $\sum_{i\in M} x_{ij} \ell_{ij} \geq 2^{k-3}$. Thus,  we have
  $\Toff(\set{r_j}) \geq T_{\lpfunc{lp1}(J_k,2^{k-3})}$.

  Now consider an execution of \ALGsem for the same $\set{r_j}$.  We
  note that if a job $j$ remains uncompleted at the start of the $k$th
  round, for $k\in\set{2,3,\ldots K}$, then $-\log r_j > 2^{k-3}$.
  This inequality follows from the fact that in the $(k-1)$th round,
  we give every job a log mass that exceeds $2^{k-3}$.  Hence,
  the jobs executed in the $k$th round are a subset of those
  defined by $J_k$ above.  By \lemref{lp1rounding}, the $k$th
  round takes time $O(T_{\lpfunc{lp1}(J_k,2^{k-2})})$.  Observing that
  $T_{\lpfunc{lp1}(J_k,2^{k-2})} \leq 2T_{\lpfunc{lp1}(J_k,2^{k-3})}$,
  we conclude that \ALGsem's $k$th round takes time
  $O(\Toff(\set{r_j}))$.
  
  We thus have that for any particular $\set{r_j}$, if \ALGsem
  completes in $d+1\leq K$ rounds, then it takes total time
  $O(\expect{\Topt} + d\Toff(\set{r_j}))$.  The $\expect{\Topt}$ term
  comes from the time it takes to execute the first round (as
  in \lemref{lplower}).  Since $\Topt \geq \expect{\Toff}$, where $\Toff$ is
  the time taken by OFF on a randomly selected $\set{r_j}$, we
  conclude that the expected time for \ALGsem's first $d+1 \leq K$
  rounds is $O(\expect{\Topt}\cdot d) = O(\expect{\Topt}\cdot K)$.
  
  We now consider the case when \ALGsem has not completed after $K$
  rounds.  Let $F_K$ be Bernoulli r.v. that is 1 if all jobs have
  completed after the $K$th round and 0 otherwise. Clearly,
  $\expect{T_{\ALGsem} | F_K = 1} = O(\expect{\Topt} \cdot K)$.  To
  complete the proof, we must show that $\prob{F_K = 0} \expect{T_{\ALGsem}
  | F_K = 0 } \leq O(\expect{\Topt} \cdot K)$.  We consider the remainder of
  the proof in two cases, depending on whether $n \leq m$ or $m < n$.
  The case of $m<n$ appears in \appref{app}.
  
  Suppose $n \leq m$ and $F_K = 0$.  Recall that after the $K$th
  round, \ALGsem runs jobs one after the other.  Recall also that
  remaining jobs must have $- \log r_j \geq 2^{K-2} \geq 2^{\log\log n +
  1} \geq 2\log n$.  This event occurs
  when $r_j < 1/n^2$, which, by the union bound, happens with
  probability no more than $1/n$.  Running jobs one at a time is trivially
   an $O(n)$-approximation, so we conclude
  that $\prob{F_K = 0}\expect{T_{\ALGsem} | F_K = 0 \land n \leq m}
  \leq O(\expect{\Topt} K)$.
\end{proof}  

\newcommand{\indepmproof}{
\begin{proof}
  This proof resumes where the same proof from \secref{indep} leaves off.
  In particular, we show here that \ALGsem produces an $O(\log\log
  m)$-approximation by showing that repeating the $(\ceil{\log\log m}
  + 3)$rd round until completion takes time $O(\expect{\Topt})$ .
  
  Suppose that $m < n$ and $F_K = 0$.  Let $\Sigma_K =
  \Sigma_{\lpfunc{lp1}(J_K,2^{k-2})}$ be the schedule computed at the
  end of the $K$th phase. Here, \ALGsem repeats $\Sigma_K$ until all
  jobs complete. Define the load $H$ of a finite schedule to be the
  maximum number of timesteps during which any machine is assigned to
  an uncompleted job (i.e., $H = \max_i \sum_j x_{ij}$); a schedule
  can be compressed to run in exactly $H$ time-steps.  We will analyze
  how long it takes for the load of our instance to drop from its
  initial expected value $O(\expect{\Topt})$ (at the end of the $K$th
  phase) to 0 (when all jobs have completed).
  %\note{Jer: I think technically
    %there is a step missing here...we haven't shown that the expected
    %size of the $K$th round is $O(\expect{\Topt})$.  It is obvious,
    %though...}
  
  Define $X$ as a random variable denoting the number of timesteps
  until the compressed load of $\Sigma_K$ drops to 0, and let $T_K$ be
  the (random variable) denoting the length of $\Sigma_K$.  Then $X =
  \sum_{i = 0}^{\log T_K} X_i T_K/2^{i} = T_K \sum_{i=0}^{\log T_K}
  X_i / 2^{i}$, where $X_i$ is the random variable representing the
  number of repetitions of $\Sigma_K$ necessary to drop its load from
  $T_K/2^{i}$ to $T_K/2^{i+1}$.  By construction, a single execution
  of $\Sigma_K$ ensures that each job remains with probability at most
  $1/m^2$, and thus the expected load of each machine shrinks by at
  least a (multiplicative) factor of $m^2$.  Markov's inequality gives
  us that each machines load decreases by a a factor of $m/2$. with
  probability at least $1/2m$.  Taking a union bound over all machines
  gives us that the load of \emph{all} machines decreases by a factor
  of $m/2$. with probability at least~$1/2$.
  
  For $m >4$, we now have that each repetition of $\Sigma_K$ decreases
  the remaining load by a factor of 2 with probability at least $1/2$,
  and hence $\expect{X_i} \leq 2$ for all~$i$---requiring only an
  expected constant number of repetitions to reduce the load by a
  constant factor.  We now have that $E\left[\sum_{i=0}^{\log T_K} X_i
    / 2^i\right] \leq \sum_{i=0}^{\infty} \expect{X_i}/2^i \leq O(1)$.
  Since each $X_i$ and $T_K$ are independent, it follows that
  $\expect{X} \leq \expect{T_K} \sum_{i=0}^{\infty} \expect{X_i}/2^i =
  O(\expect{T_K})$.  Having $\expect{T_K} \leq O(\expect{\Topt})$
  completes the proof.
\end{proof}
}

\secput{constr}{Jobs with Chain-Like Precedence Constraints}

This section gives our
$O(\log(n+m)\log\log(\min\set{m,n}))$-approximation for \suuc, the
case when precedence constraints form a collection of disjoint chains.
\punt{
In contrast, Lin and Rajaraman~\cite{LinRa07} give an $O(\log m \log
n \frac{\log(n+m)}{\log\log(n+m)})$-approximation algorithm.
}
Our
algorithm may be used as a subroutine for \suut, the more general case
where precedence constraints form disjoint trees (see
\appref{trees}). 

In \suuc, the dependency graph $G$ is a collection disjoint chains
$G=\set{C_1,C_2,\ldots,C_z}$, where each $C_k$ gives a total
order on a subset of jobs.  If job $j_1$ precedes $j_2$ in a
chain, we write $j_1 \prec j_2$.

Our algorithm for disjoint chains is similar to Lin and Rajaraman's
algorithm~\cite{LinRa07}, but we achieve a better approximation ratio
through various improvements.  We first give an overview of the
algorithm.  We provide more details later in the section.

To construct our schedule, we first find assignment $\set{x_{ij}}$ of
machines to jobs (where $x_{ij}$ is an integral number of steps for
which machine~$i$ is assigned to job~$j$), giving each job one unit of
log mass, such that the ``length'' and ``load'' of the assignment are
bounded by $O(\expect{\Topt})$.  The \defn{load} of a machine is the number
of timesteps for which any job is assigned to it (i.e, $\sum_j
x_{ij}$), and the load of the assignment is the maximum across all
machines.  The \defn{length} of a chain is the sum of the length of
the jobs in the chain.  The length of a job $j$, denoted by $d_j$, is
the maximum number of steps for which $j$ is assigned to a single
machines (i.e., $d_j = \max_i x_{ij}$).  Clearly, a schedule taking
time $T$ must have a length and load no more than~$T$.

We use an LP relaxation (similar to \lpref{lp1} in \secref{indep}) to
generate our assignment.  Details appear later in the section.  As in
\secref{indep}, our LP relaxation achieves an $O(1)$-approximation
\punt{,
representing an $O(\log m)$-factor improvement over the LP for chains
in~\cite{LinRa07}
}
.  Note that this assignment does not immediately
yield a schedule.

As we transform our assignment into an adaptive schedule, we treat long
and short jobs differently.  We say that a job is \defn{short} if the
length of its assignment is at most some value $\gamma$, to be defined
later, and the job is \defn{long} otherwise.  To simplify
presentation, suppose for now that all jobs are short.  We later
describe how to deal with long jobs.  

We then transform the assignment into an \emph{adaptive} schedule
$\Sigma_k$ for each chain $C_k$.  The schedule $\Sigma_k$ considers
the next eligible (uncompleted) job~$j$ in $C_k$, and (obliviously)
schedules the next $d_j$ timesteps according to the assignment
$\set{x_{ij}}$.  Specifically, if $\Sigma_k$ begins executing job $j$
at time~$t$, then it schedules $j$ from time $t$ to $t+x_{ij}$ on
machine $i$.  (Machine $i$ remains idle from time $t+x_{ij}$ to
$t+d_j$.)  After the $d_j$ timesteps, $\Sigma_k$ again considers the
next eligible job in the chain (which may be the same job if it
failed).  We note that each time job $j$ is obliviously scheduled, it
has a constant probability of success.  

We then combine all the $\Sigma_k$ in a straightforward manner,
yielding a ``pseudoschedule'' for the \suuc instance, denoted by
$\set{\Sigma_k}$.  In particular, a \defn{pseudoschedule} runs all
$\Sigma_k$ ``in parallel,'' possibly assigning multiple jobs to the
same machine in each timestep.  To avoid confusion, we call each of
the timesteps of a pseudoschedule a \defn{superstep}, and we call the
number of jobs assigned to a single machine during a superstep $t$ the
\defn{congestion} at that superstep, denoted by $c(t)$.  We
``flatten'' each superstep to $c(t)$ timesteps by arbitrarily ordering
the jobs assigned to each machine, thus yielding a schedule called
\ALGchn.  If $c_{\max}$ is the maximum congestion over all supersteps,
and $Z$ is the maximum length of any chain, then \ALGchn comprises
$O(c_{\max} Z)$ timesteps.  

To reduce congestion, we apply a random-delay
technique~\cite{LeightonMaRa94,ShmoysStWe94}, also used by Lin and
Rajaraman~\cite{LinRa07}.  We also utilize the fact that when chains
consist of sufficiently many (short) jobs, the number of supersteps
spanned by $\Sigma_k$ is near the expected length of $\Sigma_k$, with
high probability.
\punt{
The adaptiveness of $\Sigma_k$, not present in~\cite{LinRa07}, results
in an $O(\log n)$-factor improvement for chains consisting of short
jobs.  
}To deal with long jobs, we run \ALGsem $O(\log (n+m))$ times,
which dominates the runtime, yielding the
$O(\log(n+m)\log\log(\min\set{m,n}))$-approximation.

\subheading{Finding an assignment with low load and length.} 

As in \secref{indep}, we use an integer linear program to optimize for
the constraints.  This integer linear program for chains matches that
used in~\cite[LP1]{LinRa07}.  
\begin{eqnarray}
  \lplabel{lp2}\hspace{1cm} \min t \nonumber \\ 
  \mbox{s.t. $\sum_{i\in M} \ell_{ij}x_{ij}$} & \geq & 1 \;\;\; \forall j \in J \label{eq:lp2aggfailure}\\
  \mbox{$\sum_{j\in J} x_{ij}$} & \leq & t \;\;\; \forall i \in M
  \label{eq:lp2load}\\
  \mbox{$\sum_{j \in C_k} d_j$} & \leq & t \;\;\; \forall C_k \in G \label{eq:lp2chain}\\
  0 \leq x_{ij} & \leq & d_j \;\;\; \forall i\in M, j\in J \label{eq:lp2length}\\
  d_j & \geq & 1 \;\;\; \forall j\in J \label{eq:lp2lengthlower}\\
  x_{ij} & \in & \naturals\cup\set{0} \;\;\; \forall i\in M, j\in J \label{eq:lp2int}\ .
\end{eqnarray}
\eqrefthree{lp2aggfailure}{lp2load}{lp2int} correspond to
\eqrefthree{lp1aggfailure}{lp1load}{lp1int}, respectively, in
\lpref{lp1}.  \eqref{lp2load} bounds the load of each machine.
\eqref{lp2chain} bounds the length of each chain, and
\eqreftwo{lp2length}{lp2lengthlower} determines the length of each
job.

The following lemma, proven in~\cite[Lemma 4.2]{LinRa07}, states that
the optimal value for \lpref{lp2} is a lower bound on $\expect{\Topt}$.  

\begin{lemma}\lemlabel{lp2lower}
  Let $t_{\lpref{lp2}}$ be the optimal value for \lpref{lp2}.  Then
  $t_{\lpref{lp2}} = O(\expect{\Topt})$.  \qed
\end{lemma}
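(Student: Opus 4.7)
The plan is to mirror the symmetry argument used in \lemref{lplower}, adapted to cope with the additional chain-length and job-length constraints of \lpref{lp2}. Using the \suureform view, each job $j$ is assigned a value $r_j$ drawn uniformly from $(0,1)$ and completes only once it has accrued at least $-\log r_j$ log mass. Let $\overline U = \set{j : r_j < 1/2}$, so that $-\log r_j > 1$ for every $j \in \overline U$, and note that $\overline U$ is a uniformly random subset of $J$ with each job included independently with probability $1/2$. Let $\lpfunc{lp2}(S)$ denote \lpref{lp2} restricted to jobs in $S$ (with chains restricted accordingly), and let $t^*_S$ be its optimum.

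First, I would exhibit OPT's execution, restricted to $\overline U$, as a feasible integral solution to $\lpfunc{lp2}(\overline U)$ with value at most $\Topt$. Take $x_{ij}$ to be the number of timesteps that $\Sigma_{\rm OPT}$ assigns machine $i$ to job $j \in \overline U$, $d_j = \max_i x_{ij}$, and $t = \Topt$. Constraint \eqref{eq:lp2aggfailure} holds because OPT completes every $j \in \overline U$, which requires log mass exceeding $-\log r_j > 1$. Constraint \eqref{eq:lp2load} holds because each machine runs for at most $\Topt$ steps in OPT. Constraint \eqref{eq:lp2chain} holds because within any chain $C_k$, the jobs in $C_k \cap \overline U$ execute in precedence order inside a makespan of $\Topt$, and each consumes at least $d_j$ wall-clock steps; thus their $d_j$ sum to at most $\Topt$. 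Constraint \eqref{eq:lp2length} is immediate, and \eqref{eq:lp2lengthlower} holds because every $j \in \overline U$ is nontrivially scheduled (we can round $d_j$ up to $1$ if necessary, at the cost of only a constant factor since $\Topt \geq 1$). Hence $\Topt \geq t^*_{\overline U}$ pointwise in $\set{r_j}$.

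Second, I would observe that optimal LP solutions on $U$ and $\overline U$ can be concatenated into a feasible solution for $\lpfunc{lp2}(J)$: per-machine loads add, per-chain lengths add (since chain constraints on $U$ and $\overline U$ are over disjoint jobs in the same chain), and per-job $d_j$ values are inherited unchanged. This yields $t_{\lpref{lp2}} \leq t^*_U + t^*_{\overline U}$. Since $U$ and $\overline U$ are identically distributed, $\expect{t^*_U} = \expect{t^*_{\overline U}}$, and combining this with $\Topt \geq t^*_{\overline U}$ gives $2\expect{\Topt} \geq \expect{t^*_U + t^*_{\overline U}} \geq t_{\lpref{lp2}}$, as desired.

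The main obstacle is the chain-length constraint \eqref{eq:lp2chain}: one must argue that after restricting to $\overline U$, the $d_j$ of the surviving jobs in a chain still sum to at most $\Topt$. This uses crucially that OPT's schedule respects the precedence order $\prec$, so that the $d_j$ of the surviving jobs fit into disjoint intervals within the makespan. A secondary technical nuisance is the lower bound $d_j \geq 1$, which might be violated if OPT completes a job in a fractional sense during the restriction; this is easily patched by rounding, losing only a constant factor that is absorbed into the $O(\cdot)$.
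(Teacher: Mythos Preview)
The paper does not give its own proof of this lemma; it simply cites \cite[Lemma~4.2]{LinRa07} and places a \textsf{qed} box. So there is no in-paper argument to compare against directly.

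Your proof is correct, and it is exactly the natural extension of the paper's own symmetry argument for \lemref{lplower} to the chain setting. The one genuinely new ingredient is the verification of the chain-length constraint \eqref{eq:lp2chain}, and your argument for it is right: in any execution of $\Sigma_{\rm OPT}$, the jobs along a chain occupy pairwise disjoint time intervals (a job becomes eligible only after its predecessor completes), and each job's $d_j = \max_i x_{ij}$ is at most the length of its interval, so $\sum_{j\in C_k\cap\overline U} d_j \leq \Topt$. Concatenating optimal solutions on $U$ and $\overline U$ then works exactly as you describe, since both the load and chain-length constraints decompose additively over a partition of~$J$.

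One minor remark: your worry about \eqref{eq:lp2lengthlower} is unnecessary. In \suu all assignments are at unit granularity, so each $x_{ij}$ is a nonnegative integer; every $j\in\overline U$ requires strictly positive log mass and hence is scheduled for at least one full step on some machine, giving $d_j\geq 1$ automatically. No rounding or patching is needed.
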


The next lemma exhibits an $O(1)$-approximation to \lpref{lp2}.
\lemreftwo{lp2lower}{lp2rounding} together imply a polynomial-time
algorithm giving an integral assignment $\set{x_{ij}}$ of machines to
jobs, such that the load and length are both $O(\expect{\Topt})$.

\begin{lemma}\lemlabel{lp2rounding}
  Let $t_{\lpref{lp2}}$ be the optimal value for \lpref{lp2}.  There
  exists a polynomial-time algorithm that computes a feasible solution
  to \lpref{lp2} having value $O(t_{\lpref{lp2}})$.
\end{lemma}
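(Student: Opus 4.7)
The plan is to adapt the rounding procedure of \lemref{lp1rounding}, augmenting its flow network with per-edge capacity caps that enforce the per-job length and per-chain length constraints peculiar to \lpref{lp2}. Since the target log mass is just $L=1$ here, the calculations for \eqref{lp2aggfailure} are otherwise essentially unchanged from the independent-jobs case.

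First I would solve the LP relaxation of \lpref{lp2} (dropping \eqref{lp2int}), obtaining a fractional optimum $\set{x^*_{ij},\, d^*_j,\, t^*}$ with $t^* \leq t_{\lpref{lp2}}$. After replacing each $\ell_{ij}$ by $\ell'_{ij} = \min\set{\ell_{ij},\, 1}$---a substitution that is harmless under integrality---I would group machines by $\floor{\log \ell'_{ij}}$, setting $D^*_{jk} = \sum_{i:\,\floor{\log \ell'_{ij}} = k} x^*_{ij}$ for each job $j$ and integer $k$. Exactly as in the proof of \lemref{lp1rounding}, rounding each $D^*_{jk}$ down to $\floor{6 D^*_{jk}}$ preserves $\sum_k \floor{6 D^*_{jk}} 2^k \geq 1$.

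Next I would construct a flow network with source $s$, sink $w$, a node $u_{jk}$ per group, and a node $v_i$ per machine. The edges $(s, u_{jk})$ have capacity $\floor{6 D^*_{jk}}$ and the edges $(v_i, w)$ have capacity $\ceil{6 t^*}$, just as in \lemref{lp1rounding}. The key new feature is that the edge $(u_{jk}, v_i)$---present only when $\floor{\log \ell'_{ij}} = k$---now has finite capacity $\ceil{6 d^*_j}$ rather than being unbounded. This cap is precisely what will later let me bound $\widehat{d_j}$ and, transitively, the chain lengths. To apply Ford-Fulkerson, I need a feasible fractional flow of value $\sum_{jk} 6 D^*_{jk} \geq \sum_{jk} \floor{6 D^*_{jk}}$; this is exhibited by routing $6 x^*_{ij}$ along edge $(j,i)$, which is feasible because \eqref{lp2length} gives $6 x^*_{ij} \leq 6 d^*_j \leq \ceil{6 d^*_j}$ and \eqref{lp2load} gives $\sum_j 6 x^*_{ij} \leq \ceil{6 t^*}$. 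The Ford-Fulkerson integrality theorem then supplies an integral max flow that saturates every source edge.

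Let $\widehat{x_{ij}}$ be the integral flow on edge $(u_{jk}, v_i)$ and set $\widehat{d_j} = \max_i \widehat{x_{ij}}$. The log-mass and load bounds, \eqref{lp2aggfailure} and \eqref{lp2load}, follow as in \lemref{lp1rounding}. The new edge-capacity cap gives $\widehat{d_j} \leq \ceil{6 d^*_j} \leq 7 d^*_j$, using $d^*_j \geq 1$ from \eqref{lp2lengthlower}, so \eqref{lp2chain} yields $\sum_{j \in C_k} \widehat{d_j} \leq 7 \sum_{j \in C_k} d^*_j \leq 7 t^*$ for every chain $C_k \in G$. Altogether $\set{\widehat{x_{ij}},\, \widehat{d_j},\, O(t^*)}$ is a feasible solution to \lpref{lp2}. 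The main obstacle, compared to the independent-jobs case, is introducing the $\ceil{6 d^*_j}$ cap without breaking the Ford-Fulkerson argument; the LP constraint $x^*_{ij} \leq d^*_j$ is exactly what makes the capped fractional flow feasible.
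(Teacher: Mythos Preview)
Your proposal is correct and follows essentially the same approach as the paper: relax \lpref{lp2}, reuse the grouping/rounding from \lemref{lp1rounding}, and modify the flow network by capping edge $(j,i)$ at $\ceil{6 d^*_j}$ instead of leaving it unbounded. Your explicit observation that $x^*_{ij} \leq d^*_j$ is what keeps the fractional flow feasible under the new cap, and your chain-length bound $\sum_{j\in C_k}\widehat{d_j} \leq 7\sum_{j\in C_k} d^*_j$ (via $d^*_j \geq 1$), match the paper's reasoning exactly.
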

\begin{proof}
  The rounding proceeds as in \lemref{lp1rounding}, starting by
  removing \eqref{lp2int} and replacing \eqref{lp2aggfailure} by
  $\sum_{i\in M} \ell'_{ij}x_{ij} \geq 1$, for $\ell'_{ij} =
  \min\set{\ell_{ij},1}$.  The only major difference is in the
  capacity of some edges in the flow network.  Instead of giving edge
  $(j,i)$ an infinite capacity, we restrict the capacity of edge
  $(j,i)$ to $\ceil{6d_j^*}$, where $d_j^*$ is the assignment given by
  the optimal solution to the relaxed linear program.  We note that
  the length of a chain $C_k$ may increase up to at most $6\sum_{j\in
    C_k}d_j^* + \card{C_k} \leq 7\sum_{j\in C_k} d_j^*$.
\end{proof}

\subheading{Reducing congestion of \ALGchn}

As described thus far, \ALGchn may have $\Theta(n)$ congestion.  We
take advantage of a random-delay
technique~\cite{LeightonMaRa94,ShmoysStWe94} to reduce congestion to
$O(\frac{\log(n+m)}{\log\log(n+m)})$, with high probability.
Essentially, we modify \ALGchn to simply delay the start time of each
chain by a value chosen uniformly at random from $\set{0,1,\ldots,H}$,
where $H$ is the load of \ALGchn.  

The delay technique is summed up by the following theorem, proof
omitted (as similar theorems appear elsewhere). It originates
in~\cite{LeightonMaRa94}, and Lin and Rajaraman~\cite[Section
4.1]{LinRa07} outline the necessary proof as applied to \suuc.

\begin{theorem}\thmlabel{congestion}
  Consider a pseudoschedule $\set{\Sigma_k}$ with total
  load $H$, where $H$ is polynomially bounded in $n$ and
  $m$.  Consider the pseudoschedule $\set{\Sigma_{k'}}$ generated by
  randomly shifting, or ``delaying,'' the start time of each chain
  schedule $\Sigma_k$ by a value chosen uniformly at random from
  $\set{0,1,\ldots,H}$.  Then $\set{\Sigma_{k'}}$ has congestion at
  most $O(\frac{\log(n+m)}{\log\log(n+m)})$, with high probability with respect
  to $n$ and $m$.\qed
\end{theorem}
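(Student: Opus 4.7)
The plan is to carry out a standard Chernoff-plus-union-bound argument, in the spirit of the Leighton--Maggs--Rao delay analysis.

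First, I would fix a single machine $i$ and a single post-shift superstep $t'$, and let $L_{k,i}$ denote the number of supersteps for which $\Sigma_k$ assigns a job to machine~$i$ in the original (unshifted) pseudoschedule. For each chain $k$, let $Y_k$ be the indicator that, after the random shift $s_k$ chosen uniformly from $\set{0,1,\ldots,H}$, chain $k$ uses machine $i$ at superstep $t'$. Then $\prob{Y_k = 1} \leq L_{k,i}/(H+1)$, and because the shifts are drawn independently across chains, the $Y_k$ are mutually independent. Let $X_{i,t'} = \sum_k Y_k$ be the congestion on machine $i$ at superstep $t'$. Since the pseudoschedule has total load $H$, we have $\sum_k L_{k,i} \leq H$ for every $i$, so $\mu := \expect{X_{i,t'}} \leq H/(H+1) < 1$.

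Next, I would apply the multiplicative Chernoff bound in the form $\prob{X_{i,t'} \geq k} \leq (e\mu/k)^{k}$. Taking $k = c\log(n+m)/\log\log(n+m)$ and $\mu \leq 1$, the exponent is $k\log(k/e\mu) = \Theta(\log(n+m))$, so the tail probability is at most $(n+m)^{-\Omega(c)}$; choosing the constant $c$ large enough makes this smaller than any desired inverse polynomial in $n+m$.

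Finally, I would take a union bound over all machines $i \in M$ and all relevant supersteps $t' \in \set{0,1,\ldots,2H}$. There are only $m(2H+1) = \poly(n+m)$ such pairs, using the hypothesis that $H$ is polynomially bounded in $n$ and $m$, so by choosing $c$ sufficiently large the total failure probability is an arbitrarily small inverse polynomial, yielding the claimed high-probability bound on the congestion of $\set{\Sigma_{k'}}$.

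The main obstacle is calibrating the Chernoff bound in the correct regime: the mean $\mu$ is only $O(1)$, but the deviation we wish to rule out is merely $\Theta(\log(n+m)/\log\log(n+m))$, which is well below what additive Chernoff would handle. One therefore has to invoke the multiplicative form (or an equivalent moment generating function computation) to extract the $\log\log(n+m)$ savings; everything else is routine independence and union-bound bookkeeping.
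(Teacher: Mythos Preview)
Your argument is correct and is precisely the standard Leighton--Maggs--Rao random-delay analysis the paper has in mind; the paper itself omits the proof entirely, marking the theorem with a \qed\ and deferring to \cite{LeightonMaRa94} and \cite[Section~4.1]{LinRa07} for exactly this Chernoff-plus-union-bound computation. One cosmetic remark: your union bound over supersteps $t'\in\{0,\ldots,2H\}$ tacitly assumes the unshifted pseudoschedule has length at most $H$, whereas the theorem statement bounds only the load; in the paper's application both load and length are $O(\expect{\Topt})$ and hence polynomial, so the union bound still goes through with a polynomial number of $(i,t')$ pairs.
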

Notice that whenever the load of and length of $\set{\Sigma_k}$ are
bounded by $O(\expect{\Topt})$, it follows that the length of
$\set{\Sigma_{k'}}$ is at most $O(\expect{\Topt})$ supersteps, with high
probability.

Since $\Sigma_k$ repeats the assignment for some jobs, the load and
length of the pseudoschedule $\set{\Sigma_{k'}}$ are random variables.
We note, however, that the random successes and failures of jobs (and
hence load and length) are independent of the initial random delay
selected.  Thus, as long as our random execution yields a load and
length of $O(\expect{\Topt})$, then \thmref{congestion} implies that \ALGchn
consists of $O(\expect{\Topt})$ supersteps, for a total time of
$O(\expect{\Topt} c_{\max}) = O(\expect{\Topt}\frac{\log(n+m)}{\log\log(n+m)})$ steps.

The following lemma implies that most executions of \ALGchn result in
low load and length.  In this lemma, $y_j$ is the random variable
indicating the number of repetitions of job $j$'s assignment used to
complete $j$, and $d_j$ denotes the length of job~$j$'s assignment.
In the \suuc context, $\eta=n+m$.  The value $W$ here represents the
load or the length of the assignment.  The lemma states that whenever
a job has length (or causes load) that is logarithmically smaller than
the total, then the length of the chain or (load on a machine) is
close to the expectation, with high probability.  Union bounding over
all $O(n)$ chains (or $m$ machines) implies that the total length (and
load) of \ALGchn schedule is close to the expectation, with high
probability.  The proof appears in \appref{app}.

\setthmcount{prlemma}
\newcommand{\prlemma}{
\begin{lemma}
  Consider $y_j \in \naturals$ drawn from the geometric distribution
  $\prob{y_j = k \in \naturals} = (1/2)^k$, and let $1\leq d_j \leq
  W/\log \eta$ be a weight associated with each $y_j$ for any values
  such that $W\geq \sum_j 2d_j$ and $\log\eta\leq W$.  Then $\sum_j y_j
  d_j \leq O(cT)$ with probability at least $1-1/\eta^{c}$.
\end{lemma}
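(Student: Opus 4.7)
The statement is a Chernoff-style tail bound on a weighted sum of independent geometric random variables, and the plan is to follow the standard moment-generating-function (MGF) recipe, being careful to exploit the hypothesis $d_j \le W/\log\eta$ to control the MGF parameter. Note first that $\mathbb{E}[y_j] = \sum_{k\ge 1} k\,(1/2)^k = 2$, so $\mathbb{E}\!\left[\sum_j y_j d_j\right] = 2\sum_j d_j \le W$, putting the claimed threshold $O(cW)$ (we read the $T$ in the statement as a typo for $W$) a constant factor above the mean when $c$ is at least a sufficiently large absolute constant.

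The main computation is the MGF of a single term. For $s > 0$ with $e^{s d_j} < 2$,
\[
  \mathbb{E}[e^{s y_j d_j}] \;=\; \sum_{k\ge 1} (1/2)^k e^{s k d_j} \;=\; \frac{e^{s d_j}/2}{1 - e^{s d_j}/2}.
\]
I will choose $s = \alpha\log\eta/W$ for a small absolute constant $\alpha$; then the hypothesis $d_j \le W/\log\eta$ gives $s d_j \le \alpha$, which I pick small enough (say $\alpha \le 1/4$) that $e^{s d_j}\le 1 + 2 s d_j$ and $2 - e^{s d_j} \ge 1 - 2 s d_j \ge e^{-4 s d_j}$. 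Combining these gives
\[
  \mathbb{E}[e^{s y_j d_j}] \;=\; \frac{e^{s d_j}}{2 - e^{s d_j}} \;\le\; (1+2s d_j)\,e^{4 s d_j} \;\le\; e^{\beta s d_j}
\]
for a constant $\beta$. Taking the product over independent $y_j$ and invoking $\sum_j d_j \le W/2$ yields
\[
  \mathbb{E}\!\left[\exp\!\Bigl(s\sum_j y_j d_j\Bigr)\right] \;\le\; \exp(\beta s W / 2).
\]

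Finally, I apply Markov's inequality to $\exp(s\sum_j y_j d_j)$ at threshold $\exp(s\gamma c W)$ for a constant $\gamma$ to be chosen. This gives
\[
  \Pr\!\left[\sum_j y_j d_j \ge \gamma c W\right] \;\le\; \exp\!\bigl(-s(\gamma c W - \beta W/2)\bigr).
\]
Choosing $\gamma$ so that $\gamma c - \beta/2 \ge c$ for all $c \ge 1$ (e.g.\ $\gamma = \beta + 1$) the exponent becomes at most $-s c W = -\alpha c \log\eta$, so by rescaling constants the probability is at most $\eta^{-c}$, as claimed.

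The only delicate step is the joint choice of $s$ and $\gamma$: $s$ must be small enough that the geometric MGF converges uniformly over all $j$ (which is exactly what the hypothesis $d_j \le W/\log\eta$ buys us), yet large enough that $sW = \Theta(\log\eta)$ so the final exponent reaches $\Omega(c\log\eta)$. All other ingredients---the MGF formula, the linear upper bound on $e^u/(2-e^u)$ for small $u$, the bound $\sum_j d_j \le W/2$, and Markov's inequality---are routine once this scaling is set.
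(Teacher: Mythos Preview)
Your proof is correct, and your reading of $T$ as a typo for $W$ matches the paper's own notation clash (the paper uses $T$ in the proof where the statement uses $W$). The only cosmetic point is that your final bound comes out as $\eta^{-\alpha c}$ for the fixed $\alpha\le 1/4$, but as you note, absorbing $1/\alpha$ into the $O(\cdot)$ threshold recovers the exact form of the statement.

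Your route is genuinely different from the paper's. The paper does \emph{not} use a single moment-generating-function bound. Instead it buckets the weights dyadically: it rounds each $d_j$ to a power of $2$, groups the indices into classes $Z_k = \{j : d_j \approx W/(2^k\log\eta)\}$, and for each class bounds $\sum_{j\in Z_k} y_j$ by reinterpreting a sum of $\mathrm{Geom}(1/2)$ variables as the number of fair coin flips needed to see $|Z_k|$ heads, then applying a Chernoff bound on the coins. The per-class thresholds $b_k$ are chosen with an additive $+k$ term so that the failure probabilities decay geometrically in $k$, and a union bound over classes finishes. Your MGF argument is shorter and handles all weights simultaneously; the hypothesis $d_j\le W/\log\eta$ enters for exactly the same structural reason in both proofs---in yours it keeps $s d_j$ bounded so the geometric MGF converges, in the paper's it ensures the coarsest bucket already has weight small enough that its $b_k$ threshold contributes only $O(W)$. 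The paper's approach makes the per-weight-class contributions explicit, which can be convenient if one later wants finer control, but for the lemma as stated your argument is the cleaner one.
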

}
\prlemma
\newcommand{\prproof}{
\begin{proof}
  Round all the $d_j$ up to the next power of $2$.  Let $Z_k$ be the
  set of $j$ such that $T /(2^k \log \eta) < d_j \leq
  T/(2^{k-1}\log\eta)$, for $k\in\set{1,2,\ldots,\log(T/\log\eta)}$.
  We apply a Chernoff bound over all jobs in $Z_k$ to show that the
  weighted sum of their $y_j$ is near the expectation, with high
  probability.  In particular, let $B_k$ be the sum $b_k$ (to be
  decided later) Bernoulli random variables with probability $1/2$.
  Then clearly $\prob{B_k < \card{Z_k}} = \prob{\sum_{j \in Z_k} y_j >
    b}$.  We therefore apply Chernoff bounds to the $B_k$.  We will
  then union bound over all $B_k$ to complete the proof.  
  
  Since there are potentially many nonempty $Z_k$ sets, we choose the
  $b_k$ so as to give geometrically decreasing failure probabilities.
  In particular, we set $b_k = \alpha(c)(\card{Z_k} + \log \eta + k)$
  for some constant $\alpha(c)$, with $(1-1/\alpha(c))^2/2 = c$.  Then
  a Chernoff bound states that $\prob{B_k < \card{Z_k}} <
  e^{-(\log\eta + k)(1-1/\alpha(c))^2/2} \leq \eta^{-c} e^{-ck}$.
  Taking a union bound over all $k$ gives $\prob{{\rm any} B_k <
    \card{Z_k}} \leq \sum_{k=1}^{\log(T/\log\eta)} \eta^{-c}e^{-ck}
  \leq \eta^{-c} \sum_{k=1}^{\infty} e^{-ck} \leq \eta^{-c}$ when $c
  \geq \log_2 e$.  It follows that $\prob{{\rm any} \sum_{j\in Z_k}
    y_j > b_k} < \eta^{-c}$.  And hence with probability at least
  $1-1/\eta^c$, we have $\sum_j d_j y_j \leq \sum_k (T/(2^k \log\eta)
  \sum_{j \in Z_k} y_j) \leq \sum_k (Tb_k/(2^k\log\eta)) \leq O(c\sum_j
  d_j) + (T/\log\eta) \sum_k (\log\eta + k)/2^k = O(c\sum_j d_j) +
  O(T) = O(cT)$.
\end{proof}
}

We conclude that if jobs are short, where short jobs have length at
most $\gamma=t_{\lpfunc{lp2}}/\log(n+m)$, and if $t_{\lpfunc{lp2}}$ is
polynomial in $n$ and $m$, then \ALGchn takes time
$O(\expect{\Topt}\frac{\log(n+m)}{\log\log(n+m)})$ with high probability.  To
get this bound in expectation, we simply modify \ALGchn to run the
$O(n)$-approximation (as for \ALGsem) whenever congestion, load, or
length exceed the desired bounds, which occurs with probability at most~$1/n$.

\subheading{Handling long jobs} 

We now extend \ALGchn to handle jobs having length more than
$t_{\lpfunc{lp2}}/\log(n+m)$.  In the chain schedule $\Sigma_k$, we
replace each longer job by a ``pause'' of length
$t_{\lpfunc{lp2}}/\log(n+m)$.  Specifically, no job from the chain is
scheduled until $t_{\lpfunc{lp2}}/\log(n+m)$ supersteps later.  We
then divide our schedule \ALGchn into $O(\log(n+m))$ segments of
length $t_{\lpfunc{lp2}}/\log(n+m)$ supersteps.  Note that by
construction, there is at most one pause per chain per segment.  After
executing each segment, \ALGchn executes \ALGsem on the jobs
corresponding to the pauses starting in that segment (suspending the
rest of the chains until completion).  Once those long jobs complete,
\ALGsem continues to the next segment.

All of our previous analyses (that assume jobs are short) still hold.
In particular, we satisfy the requirement that all long relevant long
jobs complete before the short jobs are scheduled again.  Since there
are $O(\log(n+m))$ executions of \ALGsem, it follows that the total
expected time increases to
$O(\expect{\Topt}\cdot\log(n+m)\log\log(\min\set{m,n}))$, and hence we have
an $O(\log(n+m)\log\log(\min\set{m,n}))$-approximation.

\subheading{Extending to nonpolynomial $t_{\lpfunc{lp2}}$}  

We now address the requirement in \thmref{congestion} that load and
length be polynomially bounded in~$n$ and~$m$.  We make use of a trick
from~\cite[Section 3.1]{ShmoysStWe94}, also used in~\cite{LinRa07}.
Consider the chain schedule $\Sigma_k$ (having length
$O(t_{\lpfunc{lp2}})$, with high probability) before the random delay
is applied.  We round each assignment $x_{ij}$ down to the nearest
multiple of $t_{\lpfunc{lp2}}/nm$.  We thus treat the assignments as
integers in the range $\set{0,1,\ldots,O(nm)}$.  We can then apply the
random-delay technique (from \thmref{congestion}) to these rounded
assignments.  

The issue now is that the rounding may have decreased many
assignments, so we reinsert steps into the schedule.  In particular,
whenever executing job $j$, we reinsert steps (\emph{not} supersteps)
into the execution, executing only job $j$ during those steps.
Specifically, the execution of job $j$ may result in reinserting at
most an expected $2t_{\lpfunc{lp2}}/nm$ steps for each machine, and
hence $2t_{\lpfunc{lp2}}/n$ steps in total.  Summing across all $n$
jobs gives an expected $2t_{\lpfunc{lp2}}$ steps, thereby increasing
the total length of \ALGchn by $O(\expect{\Topt})$.

\begin{theorem}
  Let $T_{\ALGchn}$ denote the random variable indicating the time at
  which an execution of \ALGchn completes all jobs.  Then
  $\expect{T_{\ALGchn}} = O(\expect{\Topt} \cdot
  \log(n+m)\log\log(\min\set{m,n}))$.  \qed
\end{theorem}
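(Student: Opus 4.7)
The plan is to combine the components developed above. First I would invoke \lemreftwo{lp2lower}{lp2rounding} to obtain, in polynomial time, an integral assignment $\set{x_{ij}}$ having load and length both $O(\expect{\Topt})$, together with per-job lengths $d_j$. I would then separate jobs into short (those with $d_j \leq \gamma = t_{\lpref{lp2}}/\log(n+m)$) and long (the rest), replace each long job in its chain schedule $\Sigma_k$ by a pause of $\gamma$ supersteps, and partition the resulting short-job pseudoschedule into $O(\log(n+m))$ segments of $\gamma$ supersteps each.

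For the short-job portion, I would first bound the random length and load of $\set{\Sigma_k}$. Since each execution of a short job succeeds with constant probability, the number of repetitions $y_j$ of job $j$ is stochastically dominated by a geometric random variable with mean $O(1)$. Applying the geometric-tail lemma preceding this theorem with $W = t_{\lpref{lp2}}$ and $\eta = n+m$, once per chain (to bound chain length) and once per machine (to bound machine load), and union-bounding over the $O(n+m)$ resulting events, I would conclude that both the total load and total length of $\set{\Sigma_k}$ remain $O(\expect{\Topt})$ with probability $1 - 1/\poly(n+m)$. Next, \thmref{congestion} applied after the random delays yields congestion $c_{\max} = O(\log(n+m)/\log\log(n+m))$ w.h.p. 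Because the random delays are drawn independently of the success/failure coins, both high-probability events hold jointly, so flattening the $O(\expect{\Topt})$ supersteps of width $c_{\max}$ gives $O(\expect{\Topt}\cdot \log(n+m)/\log\log(n+m))$ time in expectation for the short-job portion.

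For the long jobs, each of the $O(\log(n+m))$ segments contains at most one long job per chain by construction, so the segment is followed by a single invocation of \ALGsem on an \suui subinstance. Since the expected optimum of any subinstance is bounded above by $\expect{\Topt}$, \thmref{indepalg} gives an expected cost of $O(\expect{\Topt}\cdot \log\log(\min\set{m,n}))$ per invocation, for a total long-job cost of $O(\expect{\Topt}\cdot \log(n+m)\log\log(\min\set{m,n}))$, which dominates the short-job contribution. The non-polynomial $t_{\lpref{lp2}}$ regime is handled by the rounding trick described above, adding only $O(\expect{\Topt})$ in expectation.

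The low-probability bad events --- load, length, or congestion exceeding their target bounds --- each occur with probability at most $1/\poly(n+m)$, and in those branches \ALGchn falls back to the trivial $O(n)$-approximation (running one job at a time on all machines), contributing only $O(\expect{\Topt})$ in expectation. The hardest part of the argument will be the careful bookkeeping so that the geometric-tail bound on short-job executions is not corrupted by conditioning introduced by the long-job pauses and the random delays; cleanly decoupling these three independent random sources is the crux.
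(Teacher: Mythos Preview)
Your proposal is correct and follows essentially the same route as the paper: the theorem in the paper is stated with a \qed precisely because its proof is the assembly of the preceding components of \secref{constr} --- \lemreftwo{lp2lower}{lp2rounding} for the assignment, the geometric-tail lemma for bounding random load and length of the short-job pseudoschedule, \thmref{congestion} for congestion after random delays, the $O(\log(n+m))$ invocations of \ALGsem for long jobs, the rounding trick for nonpolynomial $t_{\lpref{lp2}}$, and the $O(n)$-approximation fallback for low-probability failures --- which is exactly what you lay out. Your closing remark about decoupling the three independent random sources (job successes, random delays, and the \ALGsem coins) is the right place to focus the care, and the paper handles it the same way, by observing that the delays are chosen independently of the success/failure outcomes.
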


\secput{conc}{Conclusion}

In this paper, we have presented improved approximation algorithms for
multiprocessor scheduling under uncertainty.  We believe that our
bounds our not tight.  In particular, we believe that a fully adaptive
schedule should be able to trim an $O(\log\log(\min\set{m,n}))$ factor
from our bounds.  It would also be interesting if a greedy heuristic
could achieve the same bounds.  Finally, we would be interested in
developing nontrivial approximations for more general precedence
constraints.  At first glance, however, it seems like any technique
for \suu and arbitrary precedence constraints may generalize to
$R|\id{pmtn},\id{prec}|C_{\max}$, which remains unsolved.

\bibliographystyle{plain}
\bibliography{suu}

\newpage
\renewcommand{\set}[1]            {\left\{ #1 \right\}}
\renewcommand{\abs}[1]            {\left| #1\right|}
\renewcommand{\card}[1]           {\left| #1\right|}
\renewcommand{\floor}[1]          {\left\lfloor #1 \right\rfloor}
\renewcommand{\ceil}[1]           {\left\lceil #1 \right\rceil}
\renewcommand{\ang}[1]            {\ifmmode{\left\langle #1 \right\rangle}
                                 \else{$\left\langle${#1}$\right\rangle$}\fi}
\renewcommand{\prob}[1]           {\Pr\left\{ #1 \right\}}
\renewcommand{\expect}[1]         {{\rm E}\left[ #1 \right]}
\renewcommand{\bigo}[1]           {O\! \left( #1 \right)}

\appendix
\secput{reform}{Problem Reformulation}

This section presents a full description of our reformulation of the
\suu problem.  We use this new formulation to simplify both our
algorithms and the analyses involved.

To disambiguate between the original statement of the \suu problem
given in \secref{prelim} and the new one described herein, in this
section we refer to the new formulation as \suureform.
Since we show that they are equivalent,
we refer to both problems as \suu later in the paper.

As with \suu, an \suureform instance includes a set $J$ of jobs, a set
$M$ of machines, and a set of precedence constraints forming a dag.
Also as before, the instance specifies a real $q_{ij} \in [0,1]$ for
each job $j$ and machine $i$.  In \suu, $q_{ij}$ specifies a failure
probability.  In \suureform, however, we do not view $q_{ij}$ as a
probability; instead we view $\ell_{ij} = -\log q_{ij}$ as an amount
of ``work'' that a machine does towards a job completion in each unit timestep.
As in \suu, machines must be scheduled at a unit granularity.

We model the stochastic nature of the problem in \suureform by
associating with each job $j$ a single random variable $r_j$ chosen
uniformly at random from the $(0,1)$ interval.  We say that a job~$j$
completes once the total work done (or log mass accrued) on~$j$
exceeds $-\log r_j$.  More formally, let $M_{j,t}$ be the set of the
machines assigned to job~$j$ by a schedule on time~$t$.  Then $j$
completes during the first step~$t$ in which $\sum_{k=1}^t \sum_{i \in
  M_{j,k}} \ell_{ij} \geq -\log r_j$, or equivalently $\prod_{k=1}^t
\prod_{i \in M_{j,k}} q_{ij} \leq r_j$.

Note that a schedule $\Sigma$ is oblivious to the random value~$r_j$.
Instead, it is only aware of whether a job completes in each timestep.
Thus, a schedule must make its decisions for assignments in step~$t$
based only on the surviving sets of jobs $S_1,S_2,\ldots,S_t$ for each
of the preceding timesteps.  Hence, the same schedule may be applied to
both \suu and \suureform.

The following theorem states that \suu and \suureform have the same
distribution over states of uncompleted jobs in each timestep.  

\begin{theorem}
  Consider executions $X$ and $X^*$ of schedule $\Sigma$ on
  \suu-instance~$I=(J,M,\set{q_{ij}},G)$ and corresponding
  \suureform-instance $I^*=(J,M,\set{q_{ij}},G)$, respectively, that
  run for $t-1$ timesteps.  We define the \defn{history} of the
  execution $X$ (and $X^*$) after $t-1$ steps, denoted by $h_t$ (and
  $h_t^*$), as a sequence of job subsets $h_t =
  \ang{S_1,S_2,\ldots,S_t}$, where $S_k\subseteq J$ is the set of
  uncomplete jobs remaining at the start of step~$k$ in the execution.
  For any state $\ang{S_1,S_2,\ldots,S_t}$, we have $\prob{h_t =
    \ang{S_1,S_2,\ldots,S_t}} = \prob{h^*_t =
    \ang{S_1,S_2,\ldots,S_t}}$.
\end{theorem}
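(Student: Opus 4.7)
The plan is to prove the claim by induction on $t$. For the base case $t=1$, both $h_1$ and $h^*_1$ equal $\ang{J}$ deterministically, since no step has yet been executed, so the distributions agree trivially. For the inductive step, I assume the distributions of $h_{t-1}$ and $h^*_{t-1}$ agree, fix any admissible history $\ang{S_1,\ldots,S_{t-1}}$ of positive probability, and show that for every $S_t \subseteq S_{t-1}$ the conditional transition probabilities $\prob{S_t \mid \ang{S_1,\ldots,S_{t-1}}}$ match under \suu and under \suureform. Because $\Sigma$ is a function of the observable history only, the two executions assign the \emph{same} machine sets $M_{j,t-1}$ to each job $j \in S_{t-1}$ in step $t-1$, once we condition on the prefix.

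On the \suu side, the transition probability factors cleanly: each job $j \in S_{t-1}$ independently fails to complete at step $t-1$ with probability $\prod_{i \in M_{j,t-1}} q_{ij}$, so
\[
\prob{S_t \mid \ang{S_1,\ldots,S_{t-1}}} = \prod_{j \in S_t}\!\prod_{i \in M_{j,t-1}}\!q_{ij} \ \cdot\ \prod_{j \in S_{t-1}\setminus S_t}\!\Bigl(1 - \prod_{i \in M_{j,t-1}}\!q_{ij}\Bigr).
\]

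For the \suureform side, the key computation is a conditional-probability argument on the hidden variables $r_j$. The event that job $j$ has survived through step $t-2$ is exactly the event $r_j < P_j^{(t-2)} := \prod_{k=1}^{t-2}\prod_{i \in M_{j,k}} q_{ij}$; since $r_j$ is uniform on $(0,1)$ a priori, its conditional distribution given this event is uniform on $(0, P_j^{(t-2)})$. The conditional probability that $j$ still has not completed after step $t-1$ is therefore $P_j^{(t-1)}/P_j^{(t-2)} = \prod_{i \in M_{j,t-1}} q_{ij}$, which exactly matches the \suu single-step failure probability. The complementary completion probability follows by subtraction.

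The main obstacle, which I expect to be routine but requires care, is establishing that conditioning on the \emph{full} history $\ang{S_1,\ldots,S_{t-1}}$ in \suureform does not induce correlations among the $r_j$. This is where independence of the $\{r_j\}$ pays off: since $S_k$ is determined coordinate-wise by the survival events $\{r_j < P_j^{(k-1)}\}$, and each such event depends only on its own $r_j$, the posterior distribution of $(r_j)_{j \in S_{t-1}}$ factors into independent uniforms on the intervals $(0, P_j^{(t-2)})$. Applying the single-job calculation above to each coordinate independently gives the same product form as in \suu, completing the induction.
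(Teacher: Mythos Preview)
Your proposal is correct and follows essentially the same approach as the paper: induction on $t$, with the inductive step showing that conditioned on a fixed history the per-job one-step failure probability in \suureform equals $\prod_{i\in M_{j,t-1}} q_{ij}$ via the ratio $P_j^{(t-1)}/P_j^{(t-2)}$ for uniform $r_j$, and then multiplying across jobs. Your explicit justification that conditioning on the full history leaves the $r_j$ independent (because, once the history and hence the $M_{j,k}$ are fixed, the history event factors coordinate-wise) is a point the paper handles only implicitly, but otherwise the arguments coincide.
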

\begin{proof}
  By induction on time $t$.  Initially, $\prob{h_1 = \ang{J}} =
  \prob{h^*_1 = \ang{J}} = 1$.

  Suppose $h_t = h^*_t = \ang{S_1,S_2,\ldots,S_t}$.  Then $\Sigma$
  makes the same decisions for assigning machines to jobs in step~$t$
  in both executions (i.e., $\Sigma(h_t,t) = \Sigma(h^*_t,t)$).  Let
  $M_{j,t}$ be the set of machines assigned to $j$ by $\Sigma(h_t,t)$.
  Let $S_{t+1}$ and $S_{t+1}^*$ denote the random variables indicating
  the subsets of jobs remaining after executing step $t$ in $X$ and
  $X^*$, respectively.  We will show that $S_{t+1}$ and $S_{t+1}^*$
  have the same distribution.

  For each job $j\in S_t$, the probability that $X$ \emph{does not}
  complete $j$ in step~$t$ of the \suu execution~$X$ is given by
  $\prob{j\in S_{t+1}|h_t}= \prod_{i\in M_{j,t}} q_{ij}$.

  We now consider the probability that $X$ \emph{does not} complete
  $j\in S_t$ in step~$t$ of the \suureform execution~$X^*$.  By
  definition, $j$ completes if $\prod_{k=1}^t \prod_{i\in M_{j,k}}
  q_{ij} \leq r_j$.  By assumption, since $j \in S_t$, we have
  $\prod_{k=1}^{t-1}\prod_{i\in M_{j,k}} q_{ij} > r_j$.  Thus,
  \vspace{-1em}
  \begin{eqnarray*}
    \prob{j \in S_{t+1}^*|h^*_t} &=& \prob{r_j < \prod_{k=1}^t
        \prod_{i\in M_{j,k}} q_{ij} \quad\vline\quad r_j <
        \prod_{k=1}^{t-1}\prod_{i\in M_{j,k}} q_{ij}} \\
    &=& \prob{r_j < \left(\prod_{i\in M_{j,t}} q_{ij}\right)
        \left(\prod_{k=1}^{t-1}\prod_{i\in M_{j,k}} q_{ij}\right)
        \quad\vline\quad r_j < \left(\prod_{k=1}^{t-1}\prod_{i\in
        M_{j,k}} q_{ij}\right)} \\
    &=& \prob{r_j < \prod_{i\in M_{j,t}} q_{ij} \quad\vline\quad
      r_j < 1}  \\
    &=& \prod_{i\in M_{j,t}} q_{ij} \ .
  \end{eqnarray*}

  Multiplying probabilities $\prob{j\in S_{t+1} | h_t}$ for each job
  $j\in S\subseteq S_t$ and $1-\prob{j\in S_{t+1} | h_t}$ for each job
  $j \in S_t-S$ yields $\prob{S_{t+1}=S|h_t} =
  \prob{S_{t+1}^*=S|h_t^*}$.  Applying the inductive hypothesis (i.e.,
  $\prob{h_t} = \prob{h_t^*}$) completes the proof.
\end{proof}

\begin{corollary}
  Let $\Tsig$ and $\Tsig^*$ be the random variables denoting the
  amount of time it takes to execute \suu and \suureform instances,
  respectively, using schedule $\Sigma$.  Then $\expect{\Tsig} =
  \expect{\Tsig^*}$. Thus, a schedule that gives an
 $\alpha$-approximation to \suureform also gives an
 $\alpha$-approximation to \suu.
\end{corollary}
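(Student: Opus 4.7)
The plan is to leverage the preceding theorem, which already shows that the full distribution over histories $h_t = \langle S_1, S_2, \ldots, S_t \rangle$ is identical under \suu and \suureform when the same schedule $\Sigma$ is applied. Since the completion time $\Tsig$ is a deterministic function of the history (namely, the first step $t$ at which $S_{t+1} = \emptyset$), equality of the history distributions immediately implies equality of the distributions of $\Tsig$ and $\Tsig^*$.

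Concretely, I would first observe that for any nonnegative integer $t$, the event $\{\Tsig \leq t\}$ is exactly the event that $S_{t+1} = \emptyset$ in the history $h_{t+1}$, and similarly for the \suureform execution. Applying the preceding theorem to the histories of length $t+1$ gives $\prob{\Tsig \leq t} = \prob{\Tsig^* \leq t}$ for every $t$. Since both random variables take values in $\naturals\cup\{0\}$, I can then write $\expect{\Tsig} = \sum_{t\geq 0} \prob{\Tsig > t} = \sum_{t\geq 0} \prob{\Tsig^* > t} = \expect{\Tsig^*}$, which gives the first claim.

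For the second claim about approximation ratios, I would apply the first claim simultaneously to the schedule $\Sigma$ under consideration and to an optimal schedule $\Sigma_{\rm OPT}$ for the \suureform instance (noting that schedules are defined purely in terms of the common history representation, so any schedule for one formulation is a schedule for the other). If $\expect{\Tsig^*} \leq \alpha\, \expect{\Topt^*}$, then by the equality of expected makespans, $\expect{\Tsig} = \expect{\Tsig^*} \leq \alpha\, \expect{\Topt^*} = \alpha\, \expect{\Topt}$, where the final equality uses that the optimum schedule for one formulation is optimal for the other (both optima are infima of the same set of expected makespans).

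I do not anticipate a real obstacle here; the main content of the corollary is bundled into the preceding theorem, and the only care needed is in (i) writing the makespan as a measurable function of the history so that equality of distributions transfers, and (ii) noting that $\Sigma_{\rm OPT}$ is well-defined and shared across both formulations so that the approximation bound carries over without loss.
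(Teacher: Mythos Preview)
Your argument is correct. The paper states this corollary without proof, treating it as immediate from the preceding theorem; your write-up is precisely the natural justification the authors leave implicit, so there is no meaningful difference in approach to discuss.
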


\secput{trees}{Jobs with Tree-Like Precedence Constraints}

We can obtain algorithms for tree-like precedence constraints by
trivially applying techniques from~\cite{KumarMaPa05}, as done
in~\cite{LinRa07}.  We state the bound here without proof.

When precedence constraints form a directed forest, the technique
from~\cite{KumarMaPa05} decomposes the graph into $O(\log n)$ blocks, each
consisting of disjoint chains.  We then apply \ALGchn $O(\log n)$
times.

\begin{theorem}
  If precedence constraints form a directed forest, there exists a
  polynomially computable schedule with expected makespan
  $O(\expect{\Topt}\cdot \log(n)\log(n+m)\log\log(\min\set{m,n}))$. 
\end{theorem}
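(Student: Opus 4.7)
The plan is to reduce the directed-forest case to a short sequence of disjoint-chain subinstances via the chain decomposition of Kumar, Marathe, and Parulkar, and then invoke \ALGchn on each piece. Given the input forest on $n$ jobs, that decomposition produces $h = O(\log n)$ levels $L_1, L_2, \ldots, L_h$ with the property that each $L_k$ induces a collection of disjoint chains and every precedence edge of the original forest either lies inside a single level or runs from a lower-indexed level to a strictly higher-indexed one. In particular, once every job in $L_1 \cup \cdots \cup L_{k-1}$ is complete, the jobs of $L_k$ are all eligible and their remaining precedence structure is exactly a \suuc instance on a subset of $J$ using the full machine set $M$ and the original failure probabilities $\set{q_{ij}}$.

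First I would compute the Kumar--Marathe--Parulkar decomposition in polynomial time. Then, for $k = 1, 2, \ldots, h$ in order, I would invoke \ALGchn on the chain subinstance induced by $L_k$, deferring the execution of level $k+1$ until \ALGchn reports that every job in $L_k$ has completed. Both the decomposition and each invocation of \ALGchn are polynomial-time, so the composite scheduler remains polynomial-time.

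To bound the expected makespan, I would argue that for each level $L_k$, the restriction of the optimal schedule $\Sigma_{\rm OPT}$ for the whole forest instance to the jobs in $L_k$ is a feasible schedule for the $k$th subinstance; hence the optimal expected makespan of that subinstance is at most $\expect{\Topt}$. Applying the \suuc theorem, the expected time our algorithm spends on level $L_k$ is $O(\expect{\Topt} \cdot \log(n+m) \log\log(\min\set{m,n}))$. Summing over all $h = O(\log n)$ levels by linearity of expectation yields total expected makespan $O(\expect{\Topt} \cdot \log(n) \log(n+m) \log\log(\min\set{m,n}))$, as required.

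The only subtle point, and the main potential obstacle, is justifying that the optimal makespan for each level's subinstance is at most $\expect{\Topt}$: our algorithm executes levels strictly sequentially, whereas $\Sigma_{\rm OPT}$ may interleave jobs from different levels. The argument is nevertheless standard, since removing jobs and precedence edges can only decrease the optimal expected makespan, and the restriction of $\Sigma_{\rm OPT}$ to $L_k$ still respects all intra-level precedences. Beyond this monotonicity step, the proof is essentially the composition of the Kumar--Marathe--Parulkar decomposition with the chains theorem already established, so no new algorithmic ideas are needed.
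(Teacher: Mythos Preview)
Your proposal is correct and matches the paper's approach exactly: the paper states this theorem without proof, noting only that one applies the chain decomposition of~\cite{KumarMaPa05} to break the forest into $O(\log n)$ blocks of disjoint chains and then runs \ALGchn on each block in order. Your write-up in fact supplies more detail than the paper does, including the monotonicity argument bounding each subinstance's optimum by $\expect{\Topt}$, which the paper leaves implicit.
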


\punt{
When the precedence constraints form a collection of \defn{out trees},
having all edges directed away from the root, or \defn{in trees},
having all edges directed towards the root, we can apply another
technique from~\cite{KumarMaPa05} to improve the approximation.  
}

\secput{stoch}{Stochastic Scheduling} 

This section shows how our algorithms from Sections \ref{sec:indep}
and \ref{sec:constr} apply to the problem of preemptively scheduling
jobs whose lengths are given by random variables on unrelated parallel
machines. Specifically, we give polynomial time algorithms for
problems of the form $R|\id{pmtn},
\id{prec},p_j\!\!\sim\!\!\id{stoch}|\expect{C_{\max}}$ with approximation
ratios for each type of precedence constraint identical identical to
those of previous sections, so long as job lengths are drawn from
exponential distributions with known means.

First, we review the stochastic model and how it differs from \suu.
Then, we overview an $O(\log \log n)$-approximation for
$R|\id{pmtn},p_j\!\!\sim\!\!\id{stoch}|\expect{C_{\max}}$, which we
designate \stochi. Finally, we discuss briefly how to generalize to
cases with precedence constraints.

\subheading{Preliminaries}
An instance $I_{\rm stoch}=(J,M,\set{\lambda_j},\set{v_{ij}})$ of
\stochi contains a set of jobs $J$ and a set of machines $M$ just as
in \suu.  For every job~$j$, $\lambda_j$ specifies the rate parameter of the
exponential distribution from which $j$'s length, denoted by the
random variable $p_j$, is drawn. That is, $\prob{p_j \leq  c} = 1 - e^{-c\lambda_j}$.
 Only $\lambda_j$ is given in an
instance, and $p_j$ is not revealed until the job completes.  Finally for
each machine $i$ and job $j$, $v_{ij}$ specifies the speed with which
machine $i$ processes job $j$.  Specifically, let $x_{ij}$ be the
amount of time during which machine $i$ processes job $j$.  Then $j$
completes once $\sum_i x_{ij}v_{ij} \geq p_j$. This inequality should
look very similar to \suureform, which is why our earlier algorithms
apply here.

In \stochi, $x_{ij}$ need not be integral, but we do require that no
job be processed by more than one machine at the same time.  

We continue to refer to optimal algorithms by OPT and (the random
variable denoting) the time they take to run by $\Topt$.

\subheading{An $O(\log \log n)$-approximation for \stochi}

We now show how to provide a $O(\log \log n)$-approximation for
\stochi, using arguments very similar to those in \secref{indep} and a
constant-factor approximation algorithm for $R|pmtn|C_{\max}$.

Our algorithm \ALGstci operates similarly to \ALGsem. In particular,
\ALGstci operates in $K = \ceil{\log\log n +3}$ rounds, each
corresponding to an oblivious schedule $\Sigma_k$.  We construct the
oblivious $\Sigma_k$ such that any job having (stochastically chosen)
$p_j \leq 2^{k-2}/\lambda_j$ completes. Specifically, $\Sigma_k$
corresponds to (approximately) solving the \emph{deterministic} analog
$R|\id{pmtn}|C_{\max}$, setting the length of job $j$ to
$2^{k-2}/\lambda_j$.  Any jobs remaining after the end of these $K$
rounds is run one at a time on the fastest possible machine.

We use the algorithm from Lawler and Labetoulle~\cite{Lawler78} to
compute an $O(1)$-approximation for $R|\id{pmtn}|C_{\max}$ in
polynomial time, giving us each of our~$\Sigma_k$.

The following theorem states that \ALGstci approximates \stochi.
Proof (omitted) is similar to \thmref{indbound} and \lemref{lplower}

\begin{theorem}
  Let $T_{\ALGstci}$ be the random variable denoting the time it takes
  for an execution of \ALGstci to complete all jobs.  Then
  $\expect{T_{\ALGstci}} = O(\expect{\Topt})$.
\end{theorem}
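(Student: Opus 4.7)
The plan is to mirror the proof of \thmref{indepalg} almost step for step, making two substitutions appropriate to \stochi: the role played by the LP \lpref{lp1} in \suui is taken over by the deterministic problem $R|\id{pmtn}|C_{\max}$ with job lengths $2^{k-2}/\lambda_j$, and the role played by the rounding step \lemref{lp1rounding} is taken over by the Lawler--Labetoulle $O(1)$-approximation. The extra structural ingredient that distinguishes \stochi from \suui, and that we will need in order to avoid paying an $O(K)$ factor, is that exponential processing times are memoryless and have doubly-exponential tails, so the expected ``size'' of the round-$k$ sub-instance decreases doubly exponentially in $k$.

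First I would prove a stochastic analog of \lemref{lplower}: letting $L_k(S)$ denote the optimal $R|\id{pmtn}|C_{\max}$ value on a set $S$ of jobs with lengths $2^{k-2}/\lambda_j$, I would show $\expect{L_k(J)} = O(2^{k-2}\expect{\Topt})$. The argument mirrors \lemref{lplower}: fix a realization $\{p_j\}$ and let $U = \{j : p_j\lambda_j > 2^{k-3}\}$. Because $p_j$ is exponential with rate $\lambda_j$, each $j$ lies in $U$ independently with probability $e^{-2^{k-3}}$. Any feasible \stochi schedule for the realization $\{p_j\}$ must do at least $2^{k-3}/\lambda_j$ machine-speed-work on $j \in U$, so the restriction of OPT's schedule to $U$ is a feasible $R|\id{pmtn}|C_{\max}$ solution on $U$ at scale $2^{k-3}$. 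Summing over $U$ and $\overline U$ exactly as in \lemref{lplower} and taking expectation over $\{p_j\}$ yields the claimed bound.

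Next I would establish the per-round competitive statement analogous to the core of \thmref{indepalg}. On any realization $\{p_j\}$, a job surviving into round $k$ has $p_j\lambda_j > 2^{k-3}$; so if $J_k$ denotes the set of such survivors, every optimal offline schedule must give $j \in J_k$ at least $2^{k-3}/\lambda_j$ work, whence $\Toff(\{p_j\}) \geq L_k(J_k)/2$. Since the Lawler--Labetoulle approximation inside $\Sigma_k$ produces a schedule of length $O(L_k(J_k))$, round $k$ takes time $O(\Toff(\{p_j\}))$ on the realization $\{p_j\}$ whenever it occurs. In expectation this gives $\expect{\text{length of round } k \mid J_k \neq \emptyset} = O(\expect{\Topt})$.

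Finally I would sum over rounds. By exponential memorylessness, $\prob{j \in J_k} = e^{-2^{k-3}}$, so by linearity and Step~1 applied to the restricted instance, $\expect{L_k(J_k)}$ decays doubly exponentially in $k$ while the per-round scale $2^{k-2}$ grows only singly exponentially; the product is summable, giving $\sum_k \expect{\text{length of round } k} = O(\expect{\Topt})$. After round $K = \ceil{\log\log n} + 3$ each job survives with probability at most $e^{-2\log n} = 1/n^2$, so by the union bound the ``run one at a time on the fastest machine'' catch-all is needed with probability $\leq 1/n$; since it is trivially an $O(n)$-approximation, it contributes only $O(\expect{\Topt})$. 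The main obstacle I expect is the summation step: we must push the lower bound of Step~1 through the random restriction to $J_k$, so that the doubly-exponential decay of $\expect{|J_k|}$ actually translates into a doubly-exponential decay of $\expect{L_k(J_k)}$ rather than only an $e^{-2^{k-3}}$-factor on a non-shrinking base scale, which would cost a $\log\log n$ factor.
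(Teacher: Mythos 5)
Your Steps 1 and 2 faithfully reproduce what the paper has in mind (the analogue of \lemref{lplower} plus the per-round offline-competitive argument from \thmref{indepalg}), but Step 3 tries to prove something strictly stronger than the paper actually establishes, and that stronger claim is false. The printed bound $\expect{T_{\ALGstci}} = O(\expect{\Topt})$ is a typo in the theorem statement: the accompanying proof sketch explicitly states that the $K-1$ later rounds take expected time $O(\expect{\Topt}\cdot K)$, the discussion immediately preceding the theorem and Table~\ref{table:comparison} both advertise an $O(\log\log n)$-approximation for stochastic independent jobs, and this is the exact analogue of \thmref{indepalg}. So the bound you should be proving is $\expect{T_{\ALGstci}} = O(\expect{\Topt}\log\log n)$: each round has expected length $O(\expect{\Topt})$ by the offline argument, there are $K = \ceil{\log\log n}+3$ rounds, and the catch-all after round $K$ contributes $O(\expect{\Topt})$ since each job survives it with probability at most $1/n^2$.

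The doubly-exponential decay you invoke in Step 3 does not hold, and the obstacle you flag at the end is genuine and fatal. The quantity $L_k(J_k)$ is a makespan --- a \emph{maximum} over machine loads --- so $\prob{j\in J_k}=e^{-2^{k-3}}$ together with ``linearity'' gives no control over $\expect{L_k(J_k)}$: the makespan is dominated by whichever surviving job is slowest, not by how many survive. Concretely, partition the jobs into groups $g=2,\dots,G$ with group sizes $n_g = \ceil{e^{2^{g-3}}}$, where $G = \Theta(\log\log n)$ is chosen so that $\sum_g n_g \le n$; give each job $\lambda_j = 1$ and its own dedicated machine, running at speed $v_g = 2^{g-2}$ for a group-$g$ job. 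Here OPT coincides with OFF (run every job on its own machine) and $\expect{\Topt} = \Theta(1)$, since the group-$g$ maximum of $n_g$ i.i.d.\ $\mathrm{Exp}(v_g)$ completion times has mean about $\ln(n_g)/v_g = 1/2$. But in round $k$, some group-$k$ job survives with probability $1-(1-e^{-2^{k-3}})^{n_k} \ge 1-1/e$, and any such survivor alone forces $L_k(J_k) \ge 2^{k-2}/v_k = 1$, so $\expect{L_k(J_k)} = \Omega(\expect{\Topt})$ for every $k \le G$. The total is therefore $\Omega(\log\log n\cdot\expect{\Topt})$, so no argument in the spirit of your Step 3 can shave the $\log\log n$ factor from this algorithm; that factor is tight and you should prove the statement with it included.
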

\begin{proofsketch}
  The full proof includes a component similar to~\lemref{lplower},
  showing that the first round (solving for deterministic lengths
  $1/(2\lambda_j)$) approximates $\expect{\Topt}$, and a component similar
  to \thmref{indbound}, using an offline-algorithm argument to prove
  that the $K-1$ subsequent rounds take expected time $O(\expect{\Topt}\cdot
  K)$.
  
  To complete the proof, we note that when $p_j$ are bounded above by
  $2\log n/\lambda_j$, then all jobs complete during (or before)
  $\Sigma_K$.  Since the $p_j$ are exponentially distributed,
  $\prob{\exists j\;\mbox{s.t.}\;p_j > 2 \log n /\lambda_j} \leq
  1/n$. Running jobs sequentially is an $n$-approximation, but this
  sequential run occurs only with probability at most $1/n$.
\end{proofsketch}

A virtually identical algorithm gives an
$O(\log\log(n))$-approximation to the slightly weaker setting,
$R|\id{restart},p_j\!\!\sim\!\!\id{stoch}|\expect{C_{\max}}$.  In this
setting, a job must run fully on a single machine, but it may be
restarted on a different machine.  Here, job lengths are
stochastically chosen only once.  The only necessary change to the
algorithm is substitute the $k$th round with the corresponding
solution to $R||C_{\max}$, in lieu of $R|\id{pmtn}|C_{\max}$.

\subheading{Other results} 

Similar analysis yields an $O(\log\log m)$-approximation for \stochi.
Substituting assignments generated from algorithms for $R|\id{pmtn},
\id{chains} | C_{\max}$ \cite{Lenstra90} for those given by \lpref{lp2}
gives an $O( \log(n+m)\log\log(\min\set{m,n}))$-approximation when
precedence constraints form chains, and an
$O(\log(n)\log(n+m)\log\log(\min\set{m,n}))$-approximation when they
form directed forests, using the same algorithms and techniques.

\secput{app}{Proofs}

\setthmcount{oldtheorem}

\setcounter{theorem}{\thelplower}
\lplowerthm
\lplowerproof

\setcounter{theorem}{\theindbound}
\indboundthm
\indboundproof

\setcounter{theorem}{\theindepm}
\indepmthm
\indepmproof

\setcounter{theorem}{\theprlemma}
\prlemma
\prproof

\setcounter{theorem}{\theoldtheorem}

\end{document}